\newtheorem{theorem}{Theorem}
\newtheorem{definition}{Definition}
\newtheorem{lemma}{Lemma}
\definecolor{myblue}{rgb}{0.00000,0.44700,0.74100}
\definecolor{myred}{rgb}{0.85000,0.32500,0.09800}
\definecolor{myred2}{rgb}{0.95000,0.32500,0.5}
\definecolor{mycolor3}{rgb}{0.92900,0.69400,0.12500}
\definecolor{mypurple}{rgb}{0.49400,0.18400,0.55600}
\definecolor{mygreen}{rgb}{0.46600,0.67400,0.18800}
\begin{document}

\title{Time-Bandwidth Product Perspective for Multi-Soliton Phase Modulation}

\author{Alexander Span, Vahid Aref, Henning B\"ulow, and Stephan ten Brink
\thanks{A. Span and S. ten Brink are with the Institute of Telecommunications, University of Stuttgart, Stuttgart, Germany}
\thanks{V. Aref and H. B\"ulow are with Nokia Bell Labs, Stuttgart, Germany.}
}

\markboth{}%
{}

\maketitle
\vspace{-20mm}
\begin{abstract}
Multi-soliton pulses are potential candidates for fiber optical transmission
where the information is modulated and recovered in the so-called nonlinear Fourier domain. 
While this is an elegant technique to account for the channel nonlinearity,
the obtained spectral efficiency, so far, is not competitive with classic Nyquist-based schemes. This is especially due to the observation that soliton pulses generally exhibit a large time-bandwidth product.
We consider the phase modulation of spectral amplitudes of higher order solitons, taking into account their varying spectral and temporal behavior when propagating along the fiber. For second and third order solitons, we numerically optimize the pulse shapes to minimize the time-bandwidth product.
We study the behavior of multi-soliton pulse duration and bandwidth and generally observe two corner cases where we approximate them analytically. We use these results to give an estimate on the minimal achievable time-bandwidth product per eigenvalue.
\end{abstract}

\begin{IEEEkeywords}
Multi-Soliton, NFT, Nonlinear Fiber Transmission, Time-Bandwidth Product.
\end{IEEEkeywords}

\IEEEpeerreviewmaketitle

\section{Introduction}

\IEEEPARstart{C}{oherent} optical technology has increased the transmission capacities of fiber channels to a point where the Kerr-nonlinearity becomes a limiting factor. Increasing transmit power leads to strong nonlinear distortions which need to be compensated. This is, however, quite complex and achieves only small gains, since the signal interacts nonlinearly with the noise along the channel. 

The Nonlinear Fourier Transform (NFT), that has recently regained attention \cite{yousefi2014nft}, shows a potential path to dealing with nonlinear fiber channels, as described by the Nonlinear Schroedinger Equation (NLSE). The NFT maps pulses to a nonlinear spectrum where the nonlinear crosstalk is basically absent, and the signal evolution along the fiber can be described by simple equations. This motivates to modulate data in this nonlinear spectral domain. The nonlinear spectrum consists of two parts: The \emph{continuous spectrum}, which is analogous to the conventional Fourier transform, and the \emph{discrete spectrum} which describes the solitonic component of a pulse. Multi-solitons of order $N$ ($N-$solitons), as considered in this paper, are special solutions to the NLSE with discrete nonlinear spectrum only, and can be described in the nonlinear spectrum by $N$ pairs of eigenvalue and a corresponding spectral amplitude.

Modulation of the continuous spectrum has been studied theoretically and experimentally in \cite{le2014nonlinear,le2016,Le2017b,prilepsky2014nonlinear,gemechu2017} and, so far, shows a higher spectral efficiency when compared to discrete spectrum modulation. Up to $2.3$ $\frac{\mathrm{bit}}{\mathrm{s \, Hz}}$ has been demonstrated in \cite{Le2018}.
Combined modulation of both the discrete and the continuous spectrum was demonstrated in \cite{Tavakkolnia2015,aref2016condis,Aref2018}. Discrete spectrum modulation has been intensively studied already two decades ago for on-off keying of first order solitons \cite{mollenauer2006solitons}. Multi-solitons have been proposed to increase the spectral efficiency \cite{yousefi2014nft}. Since then, eigenvalue and spectral amplitude modulation have been introduced and experimentally demonstrated. 

Characterization of the spectral efficiency for soliton transmission is, however, still an open problem. First, the noise statistics in the nonlinear spectrum are not fully understood. Second, there is generally no closed-form solution for multi-solitons and their pulse duration and bandwidth. Furthermore, the temporal and (linear) spectral properties may change due to modulation and during propagation. It is, therefore, not clear how to modulate the discrete spectrum in order to achieve a small time-bandwidth product. This is, however, important as the spectral efficiency will be proportional to the inverse of the time-bandwidth product. We have already investigated the behavior of pulse duration and bandwidth for multi-solitons of order $N=2,3$ with imaginary eigenvalues in \cite{Span2017}. We also have \emph{numerically optimized} their pulse shapes to minimize the time-bandwidth product in a spectral phase modulation scenario.
In this paper, we extend that work and give an \emph{analytical estimation} for general soliton orders and also consider eigenvalues with nonzero real part. We consider a scenario where the spectral amplitudes of $N$ predefined eigenvalues are independently phase modulated. We give a definition of the time-bandwidth product that takes into account the variations due to modulation and propagation along an ideal fiber. We give some general properties for multi-solitons that preserve the time-bandwidth product and reduce the number of optimization parameters. We numerically optimize the eigenvalue constellation and the absolute value of the corresponding spectral amplitudes to minimize this time-bandwidth product of the resulting multi-soliton pulses for $N \leq 3$. We distinguish the cases of purely imaginary eigenvalues or eigenvalues parallel to the real axis. Studying the behavior of multi-soliton pulse duration and bandwidth, we generally observe two limit cases, where either the pulse duration or the bandwidth becomes minimal, respectively. We derive analytical approximations for them in these limit cases. We use those results to estimate the minimal achievable time-bandwidth product for general soliton orders $N$. Numerical observations show that this may serve as a conjectured lower-bound. 

The paper is outlined as follows: In Sec.~\ref{sec:pre}, we review the basics of the NFT and some numerical algorithms. 
In Sec.~\ref{sec:multi_soliton_transmission}, recent works on soliton modulation are summarized. We then motivate the time-bandwidth product optimization for the soliton phase modulation scenario regarding spectral efficiency and give a reasonable definition for the time-bandwidth product. The soliton pulse optimization problem is statet in Sec.~\ref{sec:num_pulse_opt} and numerical optimization resulting in the ideal soliton pulses are presented. 
In Sec.~\ref{sec:TBP_higherN}, analytical approximations for the minimal achievable time-bandwidth product are derived. Conclusions are given in Sec.\ref{sec:conclusion}.

\section{Preliminaries - NFT and INFT}\label{sec:pre}

In this section, we briefly introduce the basics of NFT and its notation and show how multi-solitons can be constructed via the Inverse NFT (INFT) using the Darboux transform \cite{matveev1991darboux}.

The pulse propagation in single polarization along an ideal lossless and noiseless fiber is characterized by the normalized standard Nonlinear Schr{\"o}dinger Equation (NLSE)
\begin{equation}\label{NLSE}
\frac{\partial}{\partial z}q(t,z)+j\frac{\partial^2}{\partial t^2}q(t,z)+2j|q(t,z)|^2q(t,z)=0.
\end{equation}
The physical pulse $Q(\tau,\ell)$ at location $\ell$ along the fiber is then described by
\begin{equation}\label{eq:Q_physical}
Q\left(\tau,\ell\right)=\sqrt{P_0}\,\,\, q\left(\frac{\tau}{T_0},\ell \frac{\left|\beta_2\right|}{2T_0^2}\right) \text{ with } P_0\cdot T_0^2=\frac{\left|\beta_2\right|}{\gamma}, 
\end{equation}
where $\beta_2<0$ is the chromatic dispersion, $\gamma$ is the Kerr nonlinearity of the fiber, and $T_0$ determines the symbol rate. The closed-form solutions of the NLSE \eqref{NLSE} can be described in a nonlinear spectrum defined by the following so-called Zakharov-Shabat system~\cite{shabat1972exact}
\begin{equation}\label{eq:ZS}
\frac{\partial }{\partial t}\left(\begin{matrix}u_1(t;z)\\ u_2(t;z)\end{matrix}\right)=
	\left(\begin{matrix}
	0 & q\left(t,z\right) e^{2j\lambda t} \\-q^*\left(t,z\right) e^{-2j\lambda t} & 0
	\end{matrix}\right)
	\left(\begin{matrix}u_1(t;z)\\ u_2(t;z)\end{matrix}\right),
\qquad 
\lim_{t \to -\infty} \left(\begin{matrix}u_1(t;z)\\ u_2(t;z)\end{matrix}\right) \to \left(\begin{matrix}1 \\ 0 \end{matrix}\right)
\end{equation}
under the assumption that $q(t,z)\to 0$ decays sufficiently fast as $|t|\to \infty$ (faster than any polynomial).
The nonlinear Fourier coefficients (Jost coefficients) are defined as
\begin{align*}
a\left(\lambda;z\right) =\lim_{t\to \infty}u_1(t;z)
\qquad \qquad b\left(\lambda;z\right) =\lim_{t\to \infty}u_2(t;z).
\end{align*}
The set of simple roots of $a(\lambda;z)$ with positive 
imaginary part is denotes as $\Omega$. Theses roots are called \textit{eigenvalues} as they do not change in terms of $z$, i.e., $\lambda_k(z)=\lambda_k$. The nonlinear spectrum is usually described by the following two parts:

\begin{itemize}
\item[(i)] Continuous part: spectral amplitude ${Q_c(\lambda;z)=b(\lambda;z)/a(\lambda;z)}$ for real frequencies $\lambda\in\mathbb{R}$.
\item[(ii)] Discrete part: $\{\lambda_k,Q_d(\lambda_k;z)\}$ where 
$\lambda_k\in\Omega$, and $Q_d(\lambda_k;z)=b(\lambda_k;z)/\frac{\partial a(\lambda;z)}{\partial \lambda}|_{\lambda=\lambda_k}$.
\end{itemize}
Some algorithms to compute the nonlinear spectrum by numerically solving the Zakharov-Shabat system \eqref{eq:ZS} are reviewed in \cite{yousefi2014nft,wahls2015fast,SergeiK.Turitsyn2017}.

An $N-$soliton pulse is described by the discrete part only and the continuous part is equal to zero (for any $z$). 
The discrete part contains $N$ pairs of eigenvalue and the corresponding spectral amplitude, i.e., $\{\lambda_k,Q_d(\lambda_k;z)\},1\leq k\leq N $. We rewrite the eigenvalues as $\lambda_k=j\sigma_k+\omega_k$ with $\sigma_k\in \mathbb{R}^+,\omega_k \in \mathbb{R}$ and denote the spectral phases as $\varphi_k=\arg\left\{Q_d(\lambda_k)\right\}$. An important property of the nonlinear spectrum is its simple linear evolution \eqref{eq:Qd_evol} when the pulse propagates along the nonlinear fiber \cite{yousefi2014nft}. We define $Q_d(\lambda_k)=Q_d(\lambda_k;z=0)$. 
\begin{equation}\label{eq:Qd_evol}
Q_d(\lambda_k;z)=Q_d(\lambda_k)\exp(-4j\lambda_k^2z),
\end{equation}
\begin{algorithm}[tb]
\SetKwInOut{Input}{Input}
\SetKwInOut{Output}{Output}

\Input{Discrete Spectrum $\{\lambda_k,Q_d(\lambda_k)\}$; $k=1,\dots,N$}
\vspace{-3mm}
\Output{$N-$soliton waveform $q(t)$}
\vspace{-3mm}
\Begin{

\For{$k\leftarrow 1$ \KwTo $N$}{
\vspace*{-.7cm}
\begin{flalign}{\displaystyle
\rho_k^{(0)}(t)\longleftarrow\frac{Q_d(\lambda_k)}{\lambda_k-\lambda_k^*}\prod_{m=1,m\ne k}^N \frac{\lambda_k-\lambda_m}{\lambda_k-\lambda_m^*} e^{2j\lambda_k t}
}; &
\label{eq:Darboux_init}
\end{flalign}
\vspace*{-.9cm}
}

$q^{(0)} \longleftarrow 0$\;

\For{$j\leftarrow 0$ \KwTo $N-1$}{
\vspace*{-.7cm}
\begin{flalign}
q^{(j+1)}(t)\longleftarrow q^{(j)}(t)+2j(\lambda_{j+1}-\lambda_{j+1}^*)\frac{\rho_{j+1}^{*(j)}(t)}{1+|\rho_{j+1}^{(j)}(t)|^2}; &
\label{eq:sig_update}
\vspace*{-.9cm}
\end{flalign}

\For{$k\leftarrow j+2$ \KwTo $N$}{
\vspace*{-.7cm}
\begin{flalign}\rho_k^{(j+1)}(t) \longleftarrow \textstyle{
\frac{(\lambda_k -\lambda_{j+1})\rho_{k}^{(j)}(t)  +\frac{\lambda_{j+1}-\lambda_{j+1}^*}{1+|\rho_{j+1}^{(j)}(t)|^2} (\rho_{k}^{(j)}(t)-\rho_{j+1}^{(j)}(t))}
{\lambda_k -\lambda_{j+1}^*-\frac{\lambda_{j+1}-\lambda_{j+1}^*}{1+|\rho_{j+1}^{(j)}(t)|^2}\left(1 + \rho_{j+1}^{*(j)}(t)\rho_{k}^{(j)}(t) \right)}}; &
\label{eq:rho_update}
\end{flalign}
\vspace*{-.4cm}
}

}

}

\caption{INFT via Darboux transform variant in \cite{aref2016control}\label{alg:DT2}}

\end{algorithm}

This transformation is linear and depends only on its own eigenvalue $\lambda_k$. This property motivates the modulation of data over independently evolving spectral amplitudes.

The INFT is used to map the modulated nonlinear spectrum at the transmitter to the corresponding time domain pulse that is launched into the fiber. Some fast INFT algorithms can be found in \cite{wahls2015},\cite{wahls2016}. For the special case of the spectrum without the continuous part, the Darboux transform can generate the corresponding multi-soliton pulse~\cite{matveev1991darboux}. Alg.~\ref{alg:DT2} shows the pseudo-code of a variant of this transform\cite{aref2016control}.
It generates an $N-$soliton signal $q\left(t\right)$ by recursively
adding a pairs of $\{\lambda_k,Q_d(\lambda_k)\}$. 
The advantage of this algorithm is that it is exact with a low computational complexity and it can be used to express some basic properties of multi-soliton pulses.

\section{Multi-Soliton Transmission}\label{sec:multi_soliton_transmission}

\subsection{Motivation and Scenario}\label{sec:multi_soliton_motivation}

Soliton pulses are special solutions of the NLSE and are therefore matched to the nonlinear fiber channel. first order solitons have the special property of keeping their pulse shape during propagation along the (ideal) nonlinear fiber. Thus, they have a simple analytical description, and can easily be detected. 

The basic idea of eigenvalue modulation for $1-$solitons dates back to \cite{Hasegawa1993} where 1-out-of-$N$ imaginary eigenvalues were selected for transmission. A corresponding lower bound on the capacity per soliton with eigenvalue modulation was derived in \cite{Shevchenko2015} and a capacity achieving probabilistic eigenvalue shaping scheme was shown in \cite{Buchberger2018}. A noise model for the discrete spectrum was first introduced in \cite{zhang2015spectral}. 
By using higher order solitons and exploiting the spectral amplitudes, many follow-up studies aimed at designing modulation formats that allow to increase the number of bits being conveyed by a soliton.
Classical time domain modulation and detection by means of the corresponding eigenvalues was demonstrated in \cite{matsuda2014}. Other works directly modulated the discrete spectrum. On-off keying of up to ten predefined eigenvalues was demonstrated in 
\cite{dong2015nonlinear,aref2016onoff}. Multi-eigenvalue position encoding in \cite{hari2016multieigenvalue} achieved spectral efficiencies above $3$ $\frac{\mathrm{bit}}{\mathrm{s \, Hz}}$. However, only for short fiber lengths (compared to the dispersion length) or for dominating nonlinearity with small dispersion.
Exploiting the modulation of the spectral amplitudes has been demonstrated experimentally \cite{aref2015experimental,geisler2016} with up to $14$ bits per soliton \cite{buelow20167eigenvalues}. The noise statistics for the perturbation of the discrete spectrum have been investigated in \cite{Zhang2017,Buelow2018}. Correlations between different degrees of freedom in the nonlinear spectrum have been investigated also in \cite{Gui2017}, potentially allowing to further increase the amount of bits per soliton.

Under general conditions, all of these discrete spectrum-based transmission schemes imply the drawback of low spectral efficiency, owing to the, generally, large time-bandwidth product of soliton pulses. For $1-$solitons, their pulse shape is known to be "sech" in time and (linear) frequency domain. Although $N$-th order solitons have $N$ times more degrees of freedom for modulation and can, therefore, ideally carry $N$ times more information bits, they are not necessarily spectrally more efficient.
In most of the previous works, authors have either not focused on spectral efficiency or have investigated a "capacity per soliton", neglecting the time-bandwidth product of the soliton pulses.

Therefore our goal is to investigate the behavior of pulse duration and bandwidth for multi-soliton pulses in an ideal and noiseless transmission scenario. We consider the transmission of multi-solitons with $N$ predefined and fixed eigenvalues $\lambda_k$, $1\leq k \leq N$, along an ideal optical fiber (described by the NLSE), where each spectral amplitude $Q_d(\lambda_k)$ is independently phase modulated. The absolute values of the spectal amplitudes $|Q_d(\lambda_k)|$ are not modulated. Note that each spectral amplitude could carry an arbitrary QAM-constellation. Variation of $|Q_d(\lambda_k)|$, however, leads to pulse position modulation, may decrease spectrally efficiency. For this defined scenario, we optimize multi-soliton pulses to have a small time-bandwidth product.

The pulse shape of a multi-soliton not only depends on $\lambda_k$ and $|Q_d(\lambda_k)|$, but also on the phase combinations of $\varphi_k$. Modulating these phases, therefore, leads to different transmit pulses that can have different pulse durations $T$ and bandwidths $B$. Besides, the $Q_d(\lambda_k)$ are transformed during propagation according to \eqref{eq:Qd_evol}. Since all eigenvalues are necessarily distinct, the spectral amplitudes $Q_d(\lambda_k)$, and thus, pulse shape, pulse duration and bandwidth vary along the fiber. Such pulse changing effects do not occur for Nyquist-based signalling over linear channels.
Fig.~\ref{fig:soliton_propagation_example} illustrates two examples of multi-solitons and their propagation along the link. 
Fig.~\ref{fig:soliton_propagation_example} (a) shows a $3-$soliton with imaginary eigenvalues and Fig.~\ref{fig:soliton_propagation_example} (b) shows a $2-$ soliton with nonzero eigenvalue real part, each for two different initial phase combinations of the $\varphi_k$. The pulse shape is shown at the transmitter, once along the fiber and at the receiver. Fig.~\ref{fig:soliton_propagation_example} (c) and (d) show the evolution of the corresponding bandwidth $B$ and pulse duration $T$ during propagation.

\begin{figure}
\hspace{10mm}
\begin{minipage}{0.49\columnwidth}
\vspace{-3mm}
\begin{tikzpicture}[baseline=(current axis.south)]
\hspace{-20mm}
\begin{axis}[samples=50,ticklabel style={font=\footnotesize},width=0.4\textwidth,height=0.12\textheight,xmin=-12,xmax=12,ymin=0,ymax=2.5,y label style={yshift=-1em},label style={font=\small},ylabel={$T_\mathrm{max}B_\mathrm{max}$},xlabel={$z$},x label style={yshift=0.5em},legend style={font=\tiny},title style={font=\small},hide axis]     
 
    \addplot[forget plot,thick,myblue,mark=square,mark repeat=100,width=\linewidth]table {3Sol_example-1_Tx.dat};

\end{axis}
\hspace{20mm}
\begin{axis}[samples=50,ticklabel style={font=\footnotesize},width=0.4\textwidth,height=0.12\textheight,xmin=-12,xmax=12,ymin=0,ymax=2.5,y label style={yshift=-1em},label style={font=\small},ylabel={$T_\mathrm{max}B_\mathrm{max}$},xlabel={$z$},x label style={yshift=0.5em},legend style={font=\tiny},title style={font=\small},hide axis]     
 
    \addplot[forget plot,thick,myblue,mark=square,mark repeat=100,width=\linewidth]table {3Sol_example-1_mid.dat};                

\end{axis}
\hspace{20mm}
\begin{axis}[samples=50,ticklabel style={font=\footnotesize},width=0.4\textwidth,height=0.12\textheight,xmin=-12,xmax=12,ymin=0,ymax=2.5,y label style={yshift=-1em},label style={font=\small},ylabel={$T_\mathrm{max}B_\mathrm{max}$},xlabel={$z$},x label style={yshift=0.5em},legend style={font=\tiny},title style={font=\small},hide axis]     
 
    \addplot[forget plot,thick,myblue,mark=square,mark repeat=100,width=\linewidth]table {3Sol_example-1_Rx.dat};

\end{axis}

\end{tikzpicture}
\vspace{1mm}

\begin{tikzpicture}[baseline=(current axis.south)]
\hspace{-20mm}
\begin{axis}[samples=50,ticklabel style={font=\footnotesize},width=0.4\textwidth,height=0.12\textheight,xmin=-12,xmax=12,ymin=0,ymax=2.5,y label style={yshift=-1em},label style={font=\small},ylabel={$T_\mathrm{max}B_\mathrm{max}$},xlabel={$z$},x label style={yshift=0.5em},legend style={font=\tiny},title style={font=\small},hide axis]     
 
    \addplot[forget plot,thick,myred,mark=*,mark repeat=100,width=\linewidth]table {3Sol_example-2_Tx.dat};

\end{axis}
\hspace{20mm}
\begin{axis}[samples=50,ticklabel style={font=\footnotesize},width=0.4\textwidth,height=0.12\textheight,xmin=-12,xmax=12,ymin=0,ymax=2.5,y label style={yshift=-1em},label style={font=\small},ylabel={$T_\mathrm{max}B_\mathrm{max}$},xlabel={$z$},x label style={yshift=0.5em},legend style={font=\tiny},title style={font=\small},hide axis]     
 
    \addplot[forget plot,thick,myred,mark=*,mark repeat=100,width=\linewidth]table {3Sol_example-2_mid.dat};                  

\end{axis}
\hspace{20mm}
\begin{axis}[samples=50,ticklabel style={font=\footnotesize},width=0.4\textwidth,height=0.12\textheight,xmin=-12,xmax=12,ymin=0,ymax=2.5,y label style={yshift=-1em},label style={font=\small},ylabel={$T_\mathrm{max}B_\mathrm{max}$},xlabel={$z$},x label style={yshift=0.5em},legend style={font=\tiny},title style={font=\small},hide axis]     
           
    \addplot[forget plot,thick,myred,mark=*,mark repeat=100,width=\linewidth]table {3Sol_example-2_Rx.dat};

\end{axis}
\node at (2,0.7) {(a)};
\end{tikzpicture}
\vspace{1mm}

\begin{tikzpicture}[baseline=(current axis.south)]
\hspace{-20mm}
\begin{axis}[samples=50,ticklabel style={font=\footnotesize},width=0.4\textwidth,height=0.12\textheight,xmin=-12,xmax=12,ymin=0,ymax=2.5,y label style={yshift=-1em},label style={font=\small},ylabel={$T_\mathrm{max}B_\mathrm{max}$},xlabel={$z$},x label style={yshift=0.5em},legend style={font=\tiny},title style={font=\small},hide axis]

    \addplot[forget plot,thick,mygreen,mark=x,mark repeat=100,width=\linewidth]table {3Sol_example-3_Tx.dat};

\end{axis}
\hspace{20mm}
\begin{axis}[samples=50,ticklabel style={font=\footnotesize},width=0.4\textwidth,height=0.12\textheight,xmin=-12,xmax=12,ymin=0,ymax=2.5,y label style={yshift=-1em},label style={font=\small},ylabel={$T_\mathrm{max}B_\mathrm{max}$},xlabel={$z$},x label style={yshift=0.5em},legend style={font=\tiny},title style={font=\small},hide axis]     
 
    \addplot[forget plot,thick,mygreen,mark=x,mark repeat=100,width=\linewidth]table {3Sol_example-3_mid.dat};              

\end{axis}
\hspace{20mm}
\begin{axis}[samples=50,ticklabel style={font=\footnotesize},width=0.4\textwidth,height=0.12\textheight,xmin=-12,xmax=12,ymin=0,ymax=2.5,y label style={yshift=-1em},label style={font=\small},ylabel={$T_\mathrm{max}B_\mathrm{max}$},xlabel={$z$},x label style={yshift=0.5em},legend style={font=\tiny},title style={font=\small},hide axis]     
 
    \addplot[forget plot,thick,mygreen,mark=x,mark repeat=100,width=\linewidth]table {3Sol_example-3_Rx.dat};

\end{axis}

\end{tikzpicture}
\vspace{1mm}

\begin{tikzpicture}[baseline=(current axis.south)]
\hspace{-62mm}
\begin{axis}[samples=50,ticklabel style={font=\footnotesize},width=0.4\textwidth,height=0.12\textheight,xmin=-12,xmax=12,ymin=0,ymax=2.5,y label style={yshift=-1em},label style={font=\small},ylabel={$T_\mathrm{max}B_\mathrm{max}$},xlabel={$z$},x label style={yshift=0.5em},legend style={font=\tiny},title style={font=\small},hide axis]

    \addplot[forget plot,thick,mypurple,mark=diamond,mark repeat=100,width=\linewidth]table {3Sol_example-4_Tx.dat};

\end{axis}
\hspace{20mm}
\begin{axis}[samples=50,ticklabel style={font=\footnotesize},width=0.4\textwidth,height=0.12\textheight,xmin=-12,xmax=12,ymin=0,ymax=2.5,y label style={yshift=-1em},label style={font=\small},ylabel={$T_\mathrm{max}B_\mathrm{max}$},xlabel={$z$},x label style={yshift=0.5em},legend style={font=\tiny},title style={font=\small},hide axis]     
 
    \addplot[forget plot,thick,mypurple,mark=diamond,mark repeat=100,width=\linewidth]table {3Sol_example-4_mid.dat};                  

\end{axis}
\hspace{20mm}
\begin{axis}[samples=50,ticklabel style={font=\footnotesize},width=0.4\textwidth,height=0.12\textheight,xmin=-12,xmax=12,ymin=0,ymax=2.5,y label style={yshift=-1em},label style={font=\small},ylabel={$T_\mathrm{max}B_\mathrm{max}$},xlabel={$z$},x label style={yshift=0.5em},legend style={font=\tiny},title style={font=\small},hide axis]     
       
    \addplot[forget plot,thick,mypurple,mark=diamond,mark repeat=100,width=\linewidth]table {3Sol_example-4_Rx.dat};

\end{axis}
\node at (2,0.7) {(b)};
\draw[->] (-4.3,-0.5) to (2.1,-0.5);
\node at (-3.2,-0.75) {$z_0$};
\node at (-1.2,-0.75) {$z_1$};
\node at (0.9,-0.75) {$z_2$};
\end{tikzpicture}
\end{minipage}
\hspace{-33mm}
\begin{minipage}{0.49\columnwidth}
\begin{tikzpicture}[baseline=(current axis.south)]
\hspace{-1mm}
\begin{axis}[ticklabel style={font=\footnotesize},width=\textwidth,height=0.12\textheight,xmin=0,xmax=2.6,ymin=0.5,ymax=3,y label style={yshift=-0.5em},label style={font=\small},ylabel={Bandwidth $B$\\(a.u.)},ytick style={draw=none},yticklabels={,,},xlabel={Distance $z$},xtick={0,1.3,2.6},xticklabels={$z_0$,$z_1$,$z_2$},x label style={yshift=0.2em},y label style={align=center},legend style={font=\tiny},title style={font=\small},axis y line=left,axis x line=bottom]

    \addplot[forget plot,thick,myblue,mark=square,mark repeat=20,width=\linewidth]table {3Sol_example-1_Bw.dat}; 
    \addplot[forget plot,thick,myred,mark=*,mark repeat=20,mark phase=10,width=\linewidth]table {3Sol_example-2_Bw.dat};   
    \addplot[forget plot,thick,mygreen,mark=x,mark repeat=20,width=\linewidth]table {3Sol_example-3_Bw.dat}; 
    \addplot[forget plot,thick,mypurple,mark=diamond,mark repeat=20,mark phase=10,width=\linewidth]table {3Sol_example-4_Bw.dat};                 

\end{axis}
\node at (7,1) {(c)};
\end{tikzpicture}\vspace{-3mm}

\begin{tikzpicture}[baseline=(current axis.south)]
\hspace{-1mm}
\begin{axis}[ticklabel style={font=\footnotesize},width=\textwidth,height=0.12\textheight,xmin=0,xmax=2.6,ymin=8,ymax=16, label style={yshift=-0.5em},label style={font=\small},ylabel={Pulse Duration $T$\\(a.u.)},xlabel={Distance $z$},x label style={yshift=0.2em},ytick style={draw=none},yticklabels={,,},xtick={0,1.3,2.6},xticklabels={$z_0$,$z_1$,$z_2$},y label style={align=center},legend style={font=\tiny},title style={font=\small},axis y line=left,axis x line=bottom]

    \addplot[forget plot,thick,myblue,mark=square,mark repeat=20,width=\linewidth]table {3Sol_example-1_Tw.dat}; 
    \addplot[forget plot,thick,myred,mark=*,mark repeat=20,mark phase=10,width=\linewidth]table {3Sol_example-2_Tw.dat};   
    \addplot[forget plot,thick,mygreen,mark=x,mark repeat=20,width=\linewidth]table {3Sol_example-3_Tw.dat}; 
    \addplot[forget plot,thick,mypurple,mark=diamond,mark repeat=20,mark phase=10,width=\linewidth]table {3Sol_example-4_Tw.dat};                 

\end{axis}
\node at (7,0.4) {(d)};
\end{tikzpicture}
\end{minipage}

\caption{Multi-soliton phase modulation and transmission: (a) $3-$soliton with $\lambda_k = j \sigma_k$ and (b) $2-$soliton with $\lambda_k = j \sigma +\omega_k$ for two different phase combinations of $\varphi_k$, respectively. Pulses are shown at the transmitter, once along the fiber and at the receiver. 
The corresponding behavior of bandwidth $B$ and pulse duration $T$ along the fiber is shown in (c) and (d).}
\label{fig:soliton_propagation_example}
\end{figure}
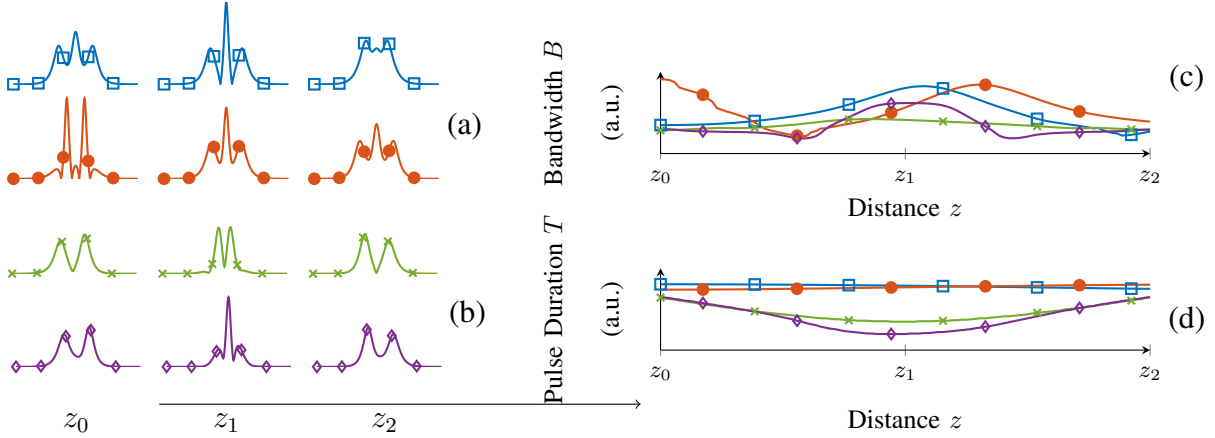

To characterize the spectral efficiency, we need to find a suitable definition for the time-bandwidth product of the soliton pulses. In an ideal tranmission case for a single channel scenario, this definition only needs to consider the bandwidth at the transmitter and the receiver $z \in \{0,L\}$ where the unknown variable $L$ is the link length. The bandwidth restriction is only given by the ability of transmitter and receiver to sample the signals. Assuming an ideal fiber transmission and amplification, the bandwidth (and its variation) along the link has no influence. For pulse duration, however, the evolution along the whole link $z \in [0,L]$ needs to be taken into account in order to avoid intersymbol interference in a train of soliton pulses. 

If the PSK constellation size for $Q_d(\lambda_k)$ is large enough, almost all phase combinations of $\varphi_k$ (and the different resulting bandwidths $B$ and pulse durations $T$) occur at the transmitter due to modulation. For simplification and to be independent of the phase modulation format, we thus consider the maximization of $T$ and $B$ over all existing combinations of $\varphi_k$. 
The absolute value of the spectral phases $|Q_d(\lambda_k)|$ is not modulated, however it may change as well during propagation according to \eqref{eq:Qd_evol}. Note that $|Q_d(\lambda_k)|$ remains fixed for purely imaginary eigenvalues. Based on these considerations, we define the time-bandwidth product as $\widehat{T} \cdot \widehat{B}$ with
\begin{align}\label{eq:TBP_def}
& \widehat{T}=\max_{\substack{\varphi_k, 1\leq k\leq N \\ |Q_d(\lambda_k,z)|, z \in [0,L]}} T
\qquad \qquad \qquad
\widehat{B}=\max_{\substack{\varphi_k, 1\leq k\leq N \\ |Q_d(\lambda_k,z)|, z \in \{0,L\}}} B
\end{align}
The pulse duration $T$ is maximized over all existing spectral phase combinations $\varphi_k$ and all spectral amplitudes $|Q_d(\lambda_k)|$ occuring \emph{at and between} transmitter and receiver. The bandwidth is maximized as well over all existing spectral phase combinations $\varphi_k$ but only over spectral amplitudes $|Q_d(\lambda_k)|$ occuring \emph{at} the positions of transmitter and receiver. Based on this definition, we aim to find multi-soliton pulses with minimum time-bandwidth product $\widehat{T}\cdot \widehat{B}$. Eigenvalues $\lambda_k$, the magnitude of the spectral amplitudes $\left|Q_d(\lambda_k)\right|$ at the transmitter, the transmission distance $L$ and the soliton order $N$ are left as parameters for this optimization. We observe that $\widehat{T} \cdot \widehat{B}$ is in general increasing with the soliton order $N$. However, a higher order soliton has also $N$ dimensions for encoding data. To have a fair comparison, we introduce the notion of "time-bandwidth product per eigenvalue" defined as
\begin{equation}\label{eq:TBP_per_eigenvalue}
\overline{T \cdot B}_N=\frac{\widehat{T} \cdot \widehat{B}}{N}.
\end{equation}

In the subsequent sections, we address the following questions:
\begin{itemize}
\item What is the minimum $\overline{T \cdot B}_N$ for a given $N$?

\item What are optimal choices for $\{\lambda_k\}_{k=1}^N$, $\{|Q_d(\lambda_k)|\}_{k=1}^N$ and how do optimal pulses look like?

\item Does the optimum "time-bandwidth product per eigenvalue" decrease with increasing $N$?
\end{itemize}

\subsection{Definition of Pulse Duration and Bandwidth}\label{sec:def}

Before we minimize \eqref{eq:TBP_per_eigenvalue}, the pulse duration $T$ and the bandwith $B$ need to be defined. One observes arbitrary multi-solitons having unbounded support and exponentially decaying tails in time and (linear) frequency domain. Their pulse shape depends on their nonlinear spectrum $\{\lambda_k,Q_d(\lambda_k;z)\}$ and changes by modulation and during propagation.
Despite pulse shape and peak power variation, the energy of the multi-soliton \emph{remains constant} and is equal to $\sum_{k=1}^N 4\sigma_k$.
This motivates the definition of pulse duration and bandwidth in terms of energy: 
\begin{definition}\label{def:TB_def_energy}

The pulse duration $T(\varepsilon)$ (and bandwidth
$B(\varepsilon)$, respectively) is defined as the smallest interval (frequency
band) containing $E_\mathrm{trunc} = (1 - \varepsilon)E_\mathrm{total}$ of the total soliton energy
\end{definition}

Note that truncation causes small perturbations of the eigenvalues. In practical transmission systems, the perturbations become even larger
due to intersymbol interference of adjacent truncated soliton pulses. Thus, there is a trade-off: $\varepsilon$ must be kept small such that the
truncation causes only small perturbations, but large enough
to have a small time-bandwidth product. In the following we use a practically reasonable $\varepsilon=10^{-4}$. Then, the time-bandwidth product of first order solitons becomes $\overline{T \cdot B}_1 \approx 9.9$. Another possible definition for $T$ and $B$ is: 

\begin{definition}\label{def:TB_def_threshold}

The pulse duration ${T(\alpha)=T_+ - T_-}$ and the bandwidth ${B(\alpha)=B_+ - B_-}$ are defined as the smallest interval such that 
\begin{multicols}{2}
\begin{itemize}
\item ${t\notin [T_-,T_+], \quad |q(t)|\leq \alpha \cdot \mathrm{max}\{|q(t)|\}}$ 
\item ${f\notin [B_-,B_+], \quad |Q(f)|\leq \alpha \cdot \mathrm{max}\{|Q(f)|\}}$
\end{itemize}
\end{multicols}
\end{definition}

$\alpha$ is chosen such that the resulting $T$ and $B$ are identical to the energy based definition for a first order soliton. Note that the peak $\mathrm{max}\{q(t)\}$ changes for different multi-soliton shapes.

\section{Multi-Soliton Pulse optimization}\label{sec:num_pulse_opt}

A first approach to minimize $\overline{T \cdot B}_N$ for a given $N$ is exhaustive search. For all combinations of $\{\lambda_k\}_{k=1}^N$, $\{|Q_d(\lambda_k)|\}_{k=1}^N$ and $L$, we evaluate $\widehat{T}\cdot \widehat{B}$
to finally select the optimum among them:
\begin{equation}\label{eq:TBP_minimization}
(\widehat{T}\cdot \widehat{B})^\star=\min_{\substack{\lambda_k,1\leq k\leq N \\ |Q_d(\lambda_k)|,1\leq k\leq N \\  L}} \widehat{T}\cdot \widehat{B}
\end{equation}
We find the optimum set of eigenvlaues $\{\lambda_k^\star\}$ and set of absolute value of spectral amplitudes at the transmitter $\{|Q_d^\star(\lambda_k)|\}$ such that the time-bandwidth product $\widehat{T}\cdot \widehat{B}$ defined by \eqref{eq:TBP_def} becomes minimal. For each given nonlinear spectrum, we construct the time domain signal $q(t)$ via Alg.~\ref{alg:DT2} and determine $T$ and $B$ numerically from $q(t)$ and its Fourier transform according to the definitions in Sec. \ref{sec:def}. We rewrite $Q_d(\lambda_k)$ as \eqref{eq:Qd_eta} where the amplitude scaling is $\eta_k \in \mathbb R^+$.
\begin{align}\label{eq:Qd_eta}
Q_d(\lambda_k) & =\eta_k \cdot |Q_\mathrm{d,init}(\lambda_k)| \cdot \exp(j\varphi_k)
\\
\mathrm{with} \qquad Q_\mathrm{d,init}(\lambda_k) & =\left(\lambda_k-\lambda_k^*\right) \prod_{m=1,m\ne k}^N \frac{\lambda_k-\lambda_m^*}{\lambda_k-\lambda_m}.\label{eq:Qd_0}
\end{align}
Thus, it is equivalent to optimize $\eta_k$ instead of $|Q_d(\lambda_k)|$. Note that the scaling factor $\eta_k$ is identical to the magnitude of the Fourier coefficient $|b(\lambda_k)|$. The benefit of this representation will become clear in the following section. We consider two special cases for the eigenvalues, either located on the imaginary axis with $\lambda_k=j\sigma_k$ or parallel to the real axis $\lambda_k=j\sigma+\omega_k$. W.l.o.g., we can assume ${\sigma_1 \geq \sigma_2 \geq \dots \geq \sigma_N}$. To indicate a "net"-gain when $N$ grows, we normalize $\overline{T \cdot B}_N$ by $\overline{T \cdot B}_1$, i.e. the time-bandwidth product of a first order soliton which can be derived analytically.

\subsection{Transformations Preserving the Time-Bandwidth Product}\label{sec:sol_prop}

To reduce the number of parameters to optimize, we show that the product $T\cdot B$ is preserved under the transformations following in Thm.~\ref{th:SolProperties}. The proofs can be found in Appx.~\ref{sec:sol_prop_proof}.

\begin{theorem}\label{th:SolProperties}
Assume that an $N$-soliton signal $q(t)$ corresponds to the nonlinear spectrum given by $\left\{\lambda_k=j\sigma_k+\omega_k\right\}_{k=1}^N$, $\left\{\eta_k\right\}_{k=1}^N$, $\left\{\varphi_k\right\}_{k=1}^N$. Then the following properties hold:

\begin{itemize}

\item[(i)]$\left\{\varphi_k-\varphi_0\right\}$ corresponds to $\exp(j\varphi_0) q(t)$.
\item[(ii)]$\left\{\exp(2\sigma_k t_0)\eta_k\right\}$, $\left\{\varphi_k-2\omega_k t_0\right\}$ corresponds to $q(t-t_0)$.
\item[(iii)] $\left\{\lambda_k / \sigma_0 \right\}$ corresponds to $\frac{1}{\sigma_0} q(t / \sigma_0)$ for $\sigma_0>0$.
\item[(iv)] $\left\{\omega_k-\omega_0\right\}$ leads to $\exp(2j\omega_0 t) q(t)$.
\item[(v)] $\left\{1/\eta_k\right\}$, $\left\{-\omega_k\right\}$ corresponds to $q(-t)$.
\item[(v*)] Setting $\left\{\eta_k=1\right\}$, $\left\{\omega_k=0\right\}$ leads to symmetric pulses ${q(t)=q(-t)}$.
\item[(vi)] $\{-\varphi_k\}$, $\{-\omega_k\}$ corresponds to $q^*(t)$
\end{itemize}

\end{theorem}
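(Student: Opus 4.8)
The plan is to work entirely from the Darboux recursion of Alg.~\ref{alg:DT2}, writing each initial function as $\rho_k^{(0)}(t)=C_k\,e^{2j\lambda_k t}$ with $C_k=Q_d(\lambda_k)/Q_\mathrm{d,init}(\lambda_k)$, so that $|C_k|=\eta_k$ by \eqref{eq:Qd_eta}--\eqref{eq:Qd_0}. The single structural fact I would exploit is that the output $q(t)$ of \eqref{eq:sig_update}--\eqref{eq:rho_update} depends on $t$ and on the spectral data only through the functions $\rho_k^{(0)}(t)$ and the constants $\lambda_k$ entering the recursion. Each property then reduces to tracking how a transformation acts on the $\rho_k^{(0)}$ and how that action is propagated. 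The organizing lemma is: if a transformation multiplies \emph{every} $\rho_k^{(0)}$ by a common \emph{unimodular} factor $g(t)$ while leaving all differences $\lambda_k-\lambda_m$ and $\lambda_k-\lambda_m^*$ unchanged, then an induction on $j$ gives $\rho_k^{(j)}\mapsto g\,\rho_k^{(j)}$ --- because $|g|=1$ leaves both $|\rho_{j+1}^{(j)}|^2$ and the product $\rho_{j+1}^{*(j)}\rho_k^{(j)}$ in \eqref{eq:rho_update} invariant, so its numerator scales by $g$ and its denominator is unchanged --- and the update \eqref{eq:sig_update} then returns $q\mapsto\bar g\,q$ through the conjugate $\rho^{*}$. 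Property (i) is the case $g=e^{-j\varphi_0}$ and property (iv) is $g=e^{-2j\omega_0 t}$, where one checks that shifting all $\omega_k$ by $\omega_0$ preserves the relevant $\lambda$-differences. For (ii) the transformation $\eta_k\mapsto e^{2\sigma_k t_0}\eta_k$, $\varphi_k\mapsto\varphi_k-2\omega_k t_0$ multiplies $C_k$ by $e^{-2j\lambda_k t_0}$, hence shifts $\rho_k^{(0)}(t)\mapsto\rho_k^{(0)}(t-t_0)$ with the $\lambda_k$ fixed, so $q(t)\mapsto q(t-t_0)$ at once.

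For (iii) I would run a second induction that also tracks the scaling of the $\lambda$'s. Under $\lambda_k\mapsto\lambda_k/\sigma_0$ the constant $C_k$ is invariant (the factor $1/\sigma_0$ coming from $|Q_\mathrm{d,init}|$ cancels, using $\sigma_0>0$) while the exponent rescales, giving $\rho_k^{(0)}(t)\mapsto\rho_k^{(0)}(t/\sigma_0)$. Since the numerator and denominator of \eqref{eq:rho_update} are each homogeneous of degree one in the $\lambda$'s, the recursion preserves $\rho_k^{(j)}(t)\mapsto\rho_k^{(j)}(t/\sigma_0)$, and the prefactor $2j(\lambda_{j+1}-\lambda_{j+1}^*)$ in \eqref{eq:sig_update} contributes the overall $1/\sigma_0$, so $q(t)\mapsto\frac{1}{\sigma_0}q(t/\sigma_0)$.

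The reflection group (v), (v*), (vi) is where conjugation and the reciprocal enter. For (vi) I would conjugate the whole recursion: with $\omega_k\mapsto-\omega_k$ (equivalently $\lambda_k\mapsto-\lambda_k^*$) and $\varphi_k\mapsto-\varphi_k$ a short computation gives $\rho_k^{(0)}\mapsto s\,\overline{\rho_k^{(0)}}$ for a global unimodular constant $s$; one then verifies that \eqref{eq:rho_update} preserves $\rho_k^{(j)}\mapsto s\,\overline{\rho_k^{(j)}}$ (the sign changes produced in numerator and denominator by $\lambda_k\mapsto-\lambda_k^*$ cancel in the quotient) and that \eqref{eq:sig_update} returns $q\mapsto\bar s\,\overline q$, i.e.\ $q^{*}$ up to a global unimodular factor that is irrelevant for $|q|$, and hence for $T$, $B$. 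Equivalently, the conjugation symmetry of \eqref{eq:ZS}, under which $u^{*}(t;\lambda)$ solves the system for $q^{*}$ at $-\lambda^{*}$, gives $\eta_k=|b|\mapsto|b^{*}|=\eta_k$ and $\arg b\mapsto-\arg b$. Property (v*) is then immediate from (v): if $\eta_k=1$ and $\omega_k=0$ the map of (v) fixes the spectral data, so its conclusion reads $q(t)=q(-t)$.

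The step I expect to be the real obstacle is (v) itself. The natural route is the time-reversal symmetry of \eqref{eq:ZS}: substituting $s=-t$ and flipping the sign of the second Jost component maps the system for $q(-t)$ onto the system for $q$ at $-\lambda$, but with the normalization that was imposed at $t\to-\infty$ now sitting at $t\to+\infty$. It is precisely this swap of left/right boundary data --- which turns the reflected, lower-half-plane point $-\lambda$ into the admissible eigenvalue $-\lambda^{*}$ and inverts the norming constant --- that produces both $\omega_k\mapsto-\omega_k$ and the reciprocal $\eta_k\mapsto1/\eta_k$. A purely Darboux-based check meets the same difficulty from the other side: the candidate relation $\rho_k^{(0)}(t)\mapsto 1/\overline{\rho_k^{(0)}(-t)}$ holds only up to an \emph{index-dependent} phase $e^{2j\arg Q_\mathrm{d,init}(\lambda_k)}$, and because \eqref{eq:rho_update} couples different indices $k$, this per-index phase does not obviously propagate through the recursion. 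Resolving that phase bookkeeping --- or, equivalently, carrying out the left/right scattering-data connection carefully --- is the crux of establishing (v).
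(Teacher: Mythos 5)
Your handling of (i)--(iv) and (vi) is correct and is essentially the paper's own argument: the paper likewise computes the transformed initializations \eqref{eq:prop1_init}--\eqref{eq:prop6_init}, propagates each relation through the update \eqref{eq:rho_update} by induction on $j$, and reads off $\tilde{q}^{(N)}$ from \eqref{eq:sig_rho_tilde}; your ``common unimodular factor'' lemma is a tidy packaging of that induction. The genuine gap is exactly where you flag it: property (v) --- and with it (v*) --- is never proved, only diagnosed. Moreover, the obstruction you diagnose is an artifact of convention, not a mathematical difficulty. Since $Q_\mathrm{d,init}(\lambda_k)$ depends only on the eigenvalues, the per-index phase $\arg Q_\mathrm{d,init}(\lambda_k)$ can be absorbed into $\varphi_k$ once and for all; this is what the paper does implicitly in passing from \eqref{eq:Darboux_init} to \eqref{eq:rho_init_etak}, i.e.\ the theorem's $\varphi_k$ is read as the phase of $C_k=Q_d(\lambda_k)/Q_\mathrm{d,init}(\lambda_k)$ (the phase of $b(\lambda_k)$), not of $Q_d(\lambda_k)$. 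With that normalization, the map $\{\eta_k\}\to\{1/\eta_k\}$, $\{\omega_k\}\to\{-\omega_k\}$ gives \emph{exactly} $\tilde{\rho}_k^{(0)}(t)=1/\rho_k^{*(0)}(-t)$ with no index-dependent factor, as in \eqref{eq:rho_init_etak_inv}; under your stricter reading ($\varphi_k=\arg Q_d(\lambda_k)$) the statement of (v) merely acquires the per-index phase shift $\varphi_k\mapsto\varphi_k-2\arg\prod_{m\neq k}\frac{\lambda_k-\lambda_m^*}{\lambda_k-\lambda_m}$, which is immaterial for every use of the theorem because $T$ and $B$ are in any case extremized over all phase combinations.

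What you still owe after fixing the convention is a finite computation, not a left/right scattering analysis: show that the update rule with reflected eigenvalues $\lambda_k\to-\lambda_k^*$, i.e.\ \eqref{eq:rho_update_etak}, preserves the relation $\tilde{\rho}_k^{(j)}(t)\,\rho_k^{*(j)}(-t)=1$ for all $k>j+1$. This is precisely the paper's induction: substituting the hypothesis into \eqref{eq:rho_update_etak} yields \eqref{eq:prop5_induction1}, writing out the reciprocal conjugate of the original update at $-t$ yields \eqref{eq:prop5_induction2}, and the two rational expressions coincide by direct algebra after clearing the nested denominators in $\rho_{j+1}^{(j)}(-t)$ and $\rho_k^{*(j)}(-t)$. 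Then \eqref{eq:sig_rho} gives $\tilde{q}^{(N)}(t)=q^{(N)}(-t)$, and (v*) follows exactly as you say, since $\eta_k=1$, $\omega_k=0$ is a fixed point of the map in (v). Your alternative route through the time-reversal symmetry of \eqref{eq:ZS} would also work, and indeed explains \emph{why} the norming constants invert, but it requires carrying out the swap of the $t\to\pm\infty$ Jost normalizations in full --- bookkeeping the half-page Darboux induction avoids. Finally, with the paper's convention your constant $s$ in (vi) is simply $1$ (compare \eqref{eq:prop6_init}), so the ``up to a global unimodular factor'' caveat there is unnecessary.
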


From these properties, simple restrictions on the optimization parameters that do not influence the time-bandwidth product of a multi-soliton \eqref{eq:TBP_minimization} can be assumed: (i) One can assume $\varphi_N=0$, (ii) one of the $\eta_k$ can be chosen arbitrarily, e.g. it suffices to assume $\eta_N = 1$, (iii) one of the $\sigma_k$ can be chosen arbitrarily, e.g. $\sigma_N=0.5$, (iv) one of the $\omega_k$ can be chosen arbitrarily, e.g. ${\omega_1=0}$ or ${\omega_1=-\omega_2}$, (v) it suffices to assume $\eta_2\in (0,1]$ or $\eta_2\in [1,\infty)$ when $\omega_k=0$ and (vi) the sign of one of the $\omega_k$ can be chosen arbitrarily, e.g. $\omega_1>0$.

\subsection{Eigenvalues on the Imaginary Axis}

First, we find optimum solitons with eigenvalues limited to the imaginary axis, $\lambda_k=j\sigma_k$. In this case, the transformation \eqref{eq:Qd_evol} simplifies to a phase rotation only; the $|Q_d(\lambda_k;z)|$ do not change during propagation. This means we can skip the maximization in terms of $|Q_d(\lambda_k;z)|$ along $z$ in \eqref{eq:TBP_def}.Thus the result will be independent of the link length $L$. Following properties i)-iii) and v) in Sec.~\ref{sec:sol_prop}, we choose $\varphi_N=0$, $\eta_N=1$, $\eta_{N-1}>1$ and $\sigma_N=0.5$. We define
\begin{equation}\label{eq:DeltaT_eta}
\Delta t_k= \ln \left( \eta_k \right) / 2\sigma_k,
\end{equation}
which, conceptually, is the temporal shift of the individual first order solitonic components by scaling $\eta_k$. We check all combinations of the free parameters $\sigma_k$ and $\Delta t_k$ to find the minimum $(\widehat{T} \cdot \widehat{B})^\star$ according to \eqref{eq:TBP_minimization}. Note that the solution is independent of the propagation distance $L$ in this case, since $|Q_d(\lambda_k;z)|$ does not change along the fiber. We uniformly quantize $\varphi_k$ to $m\cdot 2 \pi/128$, $m=1,\dots,128$. For each chosen combination of $\{\Delta t_k\}$ and $\{\sigma_k\}$ we calculate $\widehat{T}$ and $\widehat{B}$ in \eqref{eq:TBP_def} as the maximum of $T$ and $B$ for the resulting $128^{N-1}$ phase differences for $\{\varphi_k\}$. For a first coarse estimate of the minimal $\widehat{T} \cdot \widehat{B}$, the eigenvalues $\sigma_k$ are quantized in steps of $0.1$ in the range $[0.5,1.5]$, whereas we search $\Delta t_k$ in steps of $0.25$ in $[0,5]$ for $k=2$ and $[-5,5]$ for $k=3$. We increase the resolution around the observed minimum to $0.02$ for eigenvalues $\sigma_k$ and $0.05$ for $\Delta t_k$. Choosing the optimum values $\{\Delta t_k^\star\}$ and $\{\sigma_k^\star\}$ that achieve the minimal $(\widehat{T}\cdot \widehat{B})^\star$ results in the optimum soliton pulses that are shown in Fig. \ref{fig:optimum_pulse_2Sol_imag} for four different phase combinations of $\{\varphi_k\}$. Optimization is done for $N=2$, Fig.~\ref{fig:optimum_pulse_2Sol_imag}~(a) and $N=3$, Fig.~\ref{fig:optimum_pulse_2Sol_imag}~(b). The optimum values $\{\Delta t_k^\star\}$ and $\{\sigma_k^\star\}$ are given in Tab.~\ref{tab:imag_const_result} for the two definitions used for $T$ and $B$ (energy based and threshold based).
The optimum second and third order sets of pulses show an improvement in the time-bandwidth product per eigenvalue of $\overline{T \cdot B}_2 / \overline{T \cdot B}_1\approx 0.89$ and $\overline{T \cdot B}_3 / \overline{T \cdot B}_1\approx 0.84$ using the energy based definition for pulse duration and bandwidth described in Sec.~\ref{sec:def}. It turns out that the optimum pulses look similar to a train of first order solitons. The intuitive reason is that the nonlinear superposition of (overlapping) first order pulses causes large bandwidth expansions for some extreme choices of $\varphi_k$ due to the nonlinear interaction. Therefore, the first order components are slightly separated to reduce this bandwidth expansion with the cost of small increase in the pulse duration.

\begin{figure}
\begin{floatrow}
\resizebox{0.53\columnwidth}{!}{
\capbtabbox{%
\begin{tabular}{ c | c | c | c | c }    
    $N$ & definition & $\left\{\lambda_k^*\right\}_{k=1}^N$ & $\left\{\Delta t_k^*\right\}_{k=1}^N$ & $\frac{\overline{T \cdot B}_N}{\overline{T \cdot B}_1}$\\ \hline \hline
    \bm{$2$} & \bm{$\mathrm{energy}$} & \bm{$\{0.58; 0.5\}j$} & \bm{$\{2; 0\}$} & \bm{$0.89$} \\ 
    $2$ & $\mathrm{thr.}$ & $\{0.6; 0.5\}j$ & $\{1.4; 0\}$ & $0.86$ \\ 
    \bm{$3$} & \bm{$\mathrm{energy}$} & \bm{$\{0.7; 0.62; 0.5\}j$} & \bm{$\{-2.85; 1.05; 0\}$} & \bm{$0.84$} \\ 
    $3$ & $\mathrm{thr.}$ & $\{0.68; 0.62; 0.5\}j$ & $\{-2.75; 1.6; 0\}$ & $0.85$ \\
  \end{tabular}  
}{%
  \caption{Parameters for optimum pulses with imaginary eigenvalues.
  Bold parameters are used in Fig.~\ref{fig:optimum_pulse_2Sol_imag}}
   \label{tab:imag_const_result}
}
}\hspace{-3mm}
\resizebox{0.5\columnwidth}{!}{
\capbtabbox{%
  \begin{tabular}{ c | c | c | c | c }
    
    $N$ & definition & $\omega_k^\star$ & $\left\{\Delta t_k^*\right\}$ & $\frac{\overline{T \cdot B}_N}{\overline{T \cdot B}_1}$ \\ \hline \hline
    2 & energy & $\{0.075; -0.075\}$ & $\{-0.9; 0.9\}$ & 0.74 \\ 
    2 & thr. & $\{0.22; -0.22\}$ & $\{-0.9; 0.9\}$ & 0.75 \\       
    \bm{$3$} & \bm{$\mathrm{energy}$} & \bm{$\{0.55; -0.55; 0\}$} & \bm{$\{-2.2; 2.2; 0\}$} & \bm{$0.71$} \\  
    3 & thr. & $\{0.14; -0.14; 0\}$ & $\{-2.13; 2.13; 0\}$ &  0.71 \\ 		
  \end{tabular}
}{%
  \caption{Parameters for optimum pulses with eigenvalues parallel to the real axis $\lambda_k=0.5j+\omega_k$. Bold parameters are used in Fig.~\ref{fig:optimum_pulse_3Sol_real}}
  \label{tab:real_const_result}  
}
}
\end{floatrow}

\end{figure}

\subsection{Eigenvalue Real Part Constellation}\label{sec:real_const}
Next, we consider the pulse optimization for solitons with eigenvalues parallel to the real axis $\lambda_k=j\sigma+\omega_k$. We check all combinations of the free parameters $\omega_k$, $\Delta t_k$ to find the minimum $(\widehat{T} \cdot \widehat{B})^\star$ according to \eqref{eq:TBP_minimization}. Following property i) and iii), we choose ${\varphi_1=0}$ and $\sigma=0.5$. Following property iv), one can set $\omega_1=-\omega_2$ for the optimization of the time-bandwidth product $\hat{T}\cdot \hat{B}$. Considering property ii) for the given specific case $\sigma_k=\sigma$ one can further choose $\eta_2=1/\eta_1$ which is identical to $\Delta t_2=-\Delta t_1$ (see \eqref{eq:DeltaT_eta}). Note that property iv) and ii) allow also other choices; all $\omega_k$ could be shifted by some $\omega_0$ or all $\eta_k$ could be scaled by some $\exp(2\sigma t_0)$ without affecting $T \cdot B$  \emph{at a specific} $z$. For the real part eigenvalue constellation, however, $|Q_d(\lambda_k)|$, and thus $\Delta t_k$ are changing during propagation (see \eqref{eq:Qd_evol}). The $\omega_k \neq 0$ lead to a frequency shift of the first order solitonic component associated to this eigenvalue. This leads to solitonic components with different propagation velocities proportional to $\omega_k$. Consequently, multi-solitons with distinct $\omega_k$ will eventually be separated into their first order components during propagation as $z\to \infty$. This is associated to an increased required time frame for each pulse. According to property vi), we can set the signs of $\omega_1$ and $\omega_2$ depending on the signs of $\Delta t_1$ and $\Delta t_2$ such that the solitonic components will move slowly towards each other, collide at the soliton pulse center and then separate again during propagation \cite{Aref2018}. This means $\omega_1=-\omega_2>0$ if $\Delta t_2=-\Delta t_1>0$. This minimizes the overall pulse duration for all $z\in[0,L]$. 

We uniformly qunatize $\varphi_k$ to $m\cdot 2 \pi/128$, $m=1,\dots,128$. For each chosen combination of $\{\Delta t_k \coloneqq \Delta t_k(z=0)\}$ and $\{\omega_k\}$ we calculate $\widehat{T}$ and $\widehat{B}$ in \eqref{eq:TBP_def} as the maximum of $T$ and $B$ for the resulting $128^{N-1}$ phase combinations of $\{\varphi_k\}$.
For $N=2$, we find the minimum of $\widehat{T} \cdot \widehat{B}$ by exhaustive search for all combinations for $\omega_1$ and $\Delta t_1$, where we quantize $\omega_1$ in steps of $0.05$ in the range $[0,1]$ and $\Delta t_1$ in steps of $0.2$ in $[-4,0]$. For $N=3$ we search additionally the best combination of $\omega_3$ and $\Delta t_3$ quantized in $[-1,1]$ in steps of $0.1$ and $[-3,0]$ in steps of $0.2$ respectively. We increase the resolution around the observed optima to $0.01$ for $\omega_k$ and $0.05$ for $\Delta t_k$. Due to the precompensation, the optimization result is only valid for the specific (normalized) transmission distance that follows from the optimum parameters ${L^\star=\left|\frac{\Delta t_1^\star}{2\omega_1^\star}\right|}$. The respective optimization results $\{\Delta t_1^\star\}$ and $\{\omega_1^\star\}$ that minimize $\widehat{T} \cdot \widehat{B}$ are given in Tab.~\ref{tab:real_const_result} for $N=2,3$ and the two definitions (energy based and threshold based) for $T$ and $B$. 
\begin{figure}
\begin{floatrow}
\ffigbox{%
\begin{tikzpicture}[baseline=(current axis.south)]
\hspace{-20mm}
\begin{axis}[ticklabel style={font=\footnotesize},width=0.5\columnwidth,height=0.18\textheight,xmin=-10,xmax=10,ymin=0,ymax=1.5,y label style={yshift=-1em},label style={font=\small},xlabel={$t$},ylabel={$|q(t)|$},x label style={yshift=0.5em},legend style={font=\tiny},title style={font=\small}]     
 
    \addplot[forget plot,thick,myblue,width=\linewidth]table {2Sol_imag_optPulse1.dat};
    \addplot[forget plot,thick,myblue,width=\linewidth,dashed]table {2Sol_imag_optPulse2.dat};
    \addplot[forget plot,thick,myblue,width=\linewidth,dotted]table {2Sol_imag_optPulse3.dat};
    \addplot[forget plot,thick,myblue,width=\linewidth,dash dot]table {2Sol_imag_optPulse4.dat};
    
\end{axis}
\node at (0,-0.8) {(a)};
\hspace{35mm}
\begin{axis}[ticklabel style={font=\footnotesize},width=0.5\columnwidth,height=0.18\textheight,xmin=-15,xmax=15,ymin=0,ymax=1.5,y label style={yshift=-1em},label style={font=\small},xlabel={$t$},x label style={yshift=0.5em},legend style={font=\tiny,at={(1.2,-0.25)},overlay,legend columns=4},title style={font=\small}]     
 
    \addplot[forget plot,thick,myblue,width=\linewidth]table {3Sol_imag_optPulse1.dat};
    \addplot[forget plot,thick,myblue,width=\linewidth,dashed]table {3Sol_imag_optPulse2.dat};
    \addplot[forget plot,thick,myblue,width=\linewidth,dotted]table {3Sol_imag_optPulse3.dat};
    \addplot[forget plot,thick,myblue,width=\linewidth,dash dot]table {3Sol_imag_optPulse4.dat};

\end{axis}

\node at (0,-0.8) {(b)};
\end{tikzpicture}
}{%
  \caption{Optimum soliton pulses for (a) $N=2$ and (b) $N=3$, achieving the smallest $(\widehat{T} \cdot \widehat{B})^\star$ for four different phase combinations of $\{\varphi_k\}$.}\label{fig:optimum_pulse_2Sol_imag}
}
\ffigbox{%
\begin{tikzpicture}[baseline=(current axis.south)]
\hspace{-25mm}
\begin{axis}[ticklabel style={font=\footnotesize},width=0.225\textwidth,height=0.18\textheight,xmin=-10,xmax=10,ymin=0,ymax=2.5,y label style={yshift=-1.5em},label style={font=\small},ylabel={$|q(t)|$},xlabel={$t$},x label style={yshift=0.5em},legend style={font=\tiny},title style={font=\small}]     
 
    \foreach \k in {1,...,9} 
        {
    \addplot[forget plot,thick,myred,width=\linewidth]table {3Sol_real_opt_Tx\k.dat};

        }

\end{axis}
\node at (0,-0.8) {(a)};
\node at (0.55,2.1) {$z=0$};
\hspace{24mm}
\begin{axis}[ticklabel style={font=\footnotesize},width=0.225\textwidth,height=0.18\textheight,xmin=-10,xmax=10,ymin=0,ymax=2.5,label style={font=\small},xlabel={$t$},yticklabels=none,x label style={yshift=0.5em},legend style={font=\tiny},title style={font=\small}]       
    
    \foreach \k in {1,...,9} 
        {
    \addplot[forget plot,thick,myred,width=\linewidth]table {3Sol_real_opt_mid\k.dat};

        }

\end{axis}
\node at (0,-0.8) {(b)};
\node at (0.55,2.1) {$z=\frac{L}{2}$};
\hspace{24mm}
\begin{axis}[ticklabel style={font=\footnotesize},width=0.225\textwidth,height=0.18\textheight,xmin=-10,xmax=10,ymin=0,ymax=2.5,y label style={yshift=-1.5em},label style={font=\small},xlabel={$t$},yticklabels=none,x label style={yshift=0.5em},legend style={font=\tiny},title style={font=\small}]       
    
    \foreach \k in {1,...,9} 
        {
    \addplot[forget plot,thick,myred,width=\linewidth]table {3Sol_real_opt_Rx\k.dat};

        }

\end{axis}
\node at (0,-0.8) {(c)};
\node at (0.55,2.1) {$z=L$};
\end{tikzpicture}
}{%
\caption{Optimum third order soliton pulse for different phase combinations of $\{\varphi_k\}$ at (a) transmitter, (b) along the fiber at $z=L/2$ and (c) at the receiver.}
\label{fig:optimum_pulse_3Sol_real}
}
\end{floatrow}
\end{figure}
\begin{figure}
\hspace{-8mm}
\begin{tikzpicture}[baseline=(current axis.south)]
\begin{axis}[axis y line*=left,ymin=13, ymax=20,ticklabel style={font=\footnotesize},width=0.8\textwidth,height=0.21\textheight,xmin=0,xmax=6,y label style={yshift=-1em},label style={font=\small},ylabel={$T_{\rm max}$},xlabel={$z$},y label style={myred},x label style={yshift=0.5em},legend style={font=\small},yticklabel style={myred},title style={font=\small},y axis line style={myred}]

    \addplot[mark=square,mark repeat=5,forget plot,thick,myred,width=\linewidth]table {T_2Sol_real_energy.dat};

    \addplot[mark=o,mark repeat=4,forget plot,thick,myred,width=\linewidth]table {T_3Sol_real_energy-1.dat};

\end{axis}
\begin{axis}[axis y line*=right,ylabel={$B_{\rm max}$},ylabel near ticks, yticklabel pos=right,ymin=0.8, ymax=3.5,axis x line=none,ticklabel style={font=\footnotesize,myblue},width=0.8\textwidth,height=0.21\textheight,xmin=0,xmax=6,label style={font=\small,myblue},legend style={font=\small,at={(0.975,1.25)}},overlay,legend columns=4,title style={font=\small},y axis line style={myblue}]

    \addplot[mark=square*,mark repeat=4,forget plot,thick,myblue,width=\linewidth]table {B_2Sol_real_energy.dat};

    \addplot[mark=*,mark repeat=2,forget plot,thick,myblue,width=\linewidth]table {B_3Sol_real_energy-1.dat};

\addlegendimage{line legend,myred,mark=square}
\addlegendentry{$N=2$, $T_{\rm max}$}
\addlegendimage{line legend,myblue,mark=square*}
\addlegendentry{$N=2$, $B_{\rm max}$}
\addlegendimage{line legend,myred,mark=o}
\addlegendentry{$N=3$, $T_{\rm max}$}
\addlegendimage{line legend,myblue,mark=*}
\addlegendentry{$N=3$, $B_{\rm max}$}

\end{axis}
\end{tikzpicture}

\caption{Time-bandwidth product $T_\mathrm{max} B_\mathrm{max}$ during propagation for optimum second/third order soliton with $\lambda_k^*=\omega_k^*+j\sigma$.}
\label{fig:TBP_vs_z_2Sol_real_opt}
\end{figure}
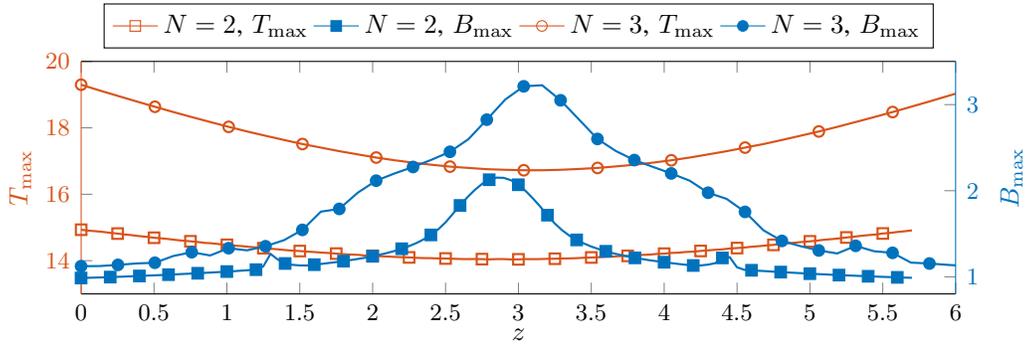

Note that multiple parameter sets achieve similar optimization results for $\widehat{T} \cdot \widehat{B}$. However the achieved $\widehat{T} \cdot \widehat{B}$ are very close. Thus the given parameter set $\{\omega_k^\star\}$ and $\{\Delta t_k^\star\}$ result in soliton pulses very close to the optimum. Using the resulting optimized third order pulses, an improvement in the time-bandwidth product per eingenvalue of $\overline{T \cdot B}_3 / \overline{T \cdot B}_1\approx 0.73$ can be achieved in this scenario. For $N=2$ we achieve $\overline{T \cdot B}_2 / \overline{T \cdot B}_1\approx 0.75$.
The resulting optimum soliton pulse for $N=3$ is shown in Fig.~\ref{fig:optimum_pulse_3Sol_real} for different phase combinations of $\{\varphi_k\}$ at transmitter $z=0$, along the fiber $z=L/2$ and at the receiver $z=L$. Note that, althoug changing during propagation, the $\Delta t_k$ will keep their respective relation. From \eqref{eq:Qd_evol}, it follows that after propagation along a distance ${L=\left|\frac{\Delta t_k(z=0)}{2\omega_k }\right|}$, the $\Delta t_k(z)$ are transformed to $\Delta t_k(z=L)=-\Delta t_k(z=0)$. Thus the resulting multi-solitons achieve their $\widehat{T}$ at $z=0$ as well as at $z=L$ and have the same $\widehat{B}$ at $z=0$ and at $z=L$. This scenario with eigenvalues parallel to the real axis outperforms the scenario with purely imaginary eigenvalues, however, only for the fixed normalized propagation distance that matches to the optimum soliton parameters as ${L^\star=\left|\frac{\Delta t_1^\star}{2\omega_1^\star}\right|}$. For the optimum parameters from Tab.~\ref{tab:real_const_result}, we get $L \approx 2$.
The intuitive reason is that due to changing $\Delta t_k$ along the propagation, we can avoid large bandwidths at transmitter and receiver by "shifting" the bandwidth expansion to the center of the transmission link. For propagation distances larger than $L^\star$, the solitons further separate and the overall pulse duration gets larger which reduces the achievable time-bandwidth product gain. We illustrate this influence of the changing $\Delta t_k$ on the pulse duration and bandwidth during propagation. We define \eqref{eq:TmaxBmax} where $T_\mathrm{max}$ and $B_\mathrm{max}$ are the maximum pulse duration and bandwidth that could be observed at a \emph{specific} $z$ along the fiber.
\begin{equation}\label{eq:TmaxBmax}
T_{\max}=\underset{\varphi_k, 1\leq k\leq N}{\max} T \qquad B_{\max}=\max_{\varphi_k, 1\leq k\leq N} B, 
\end{equation}
Comparing with \eqref{eq:TBP_def}, the time-bandwidth product is given by the maximization of $T_{\max}$ and $B_{\max}$ along $z=[0,L]$ or $z=\{0,L\}$, respectively. We show the evolution of $T_{\max}$ and $B_{\max}$ during propagation along $z\in [0,L]$ in Fig.~\ref{fig:TBP_vs_z_2Sol_real_opt} for the optimal $2-$ and $3-$solitons.
We observe these solitons indeed achieving their maximum $T_{\max}$ at $z=0$ as well as at $z=L$ leading to $\widehat{T}=\left.T_{\max}\right|_{z=0}$. Furthermore they have the same $B_{\max}$ at $z=0$ and at $z=L$ giving $\widehat{B}=\left.B_{\max}\right|_{z=0}$.

\section{Estimation of Time-Bandwidth Product}\label{sec:TBP_higherN}

From the optimization results in the previous section, we observe the improvement in the time-bandwidth product per eigenvalue being small and slowly decreasing in $N$. The complexity of the numerical brute-force optimization, however, grows exponentially in the number of eigenvalues. Instead of finding the optimal soliton pulses, we estimate the smallest possible $(\widehat{T}\cdot \widehat{B})^\star$, to see how the time-bandwidth product per eigenvalue could potentially be decreased for higher $N$.

Finding the joint minimization $\min \widehat{T}\cdot \widehat{B}$ in \eqref{eq:TBP_minimization} is hard. Therefore we look at\begin{equation}\label{eq:minTminB}
(\widehat{T}\cdot \widehat{B})^\star \geq \min_{\{\lambda_k\}_{k=1}^N} \left( \min_{\substack{\{\Delta t_k\}_{k=1}^N \\  L}} \widehat{T}  \min_{\substack{\{\Delta t_k\}_{k=1}^N \\  L}} \widehat{B} \right)
\end{equation}
where we minimize pulse duration and bandwidth individually with respect to $\Delta t_k$. For each minimization, the minima are achieved for zero propagation distance, $L=0$ (pulse duration and bandwidth change during propagation). Then using \eqref{eq:TBP_def} and the definition \eqref{eq:TmaxBmax}, we get 
\begin{equation}\label{eq:minTminB2}
(\widehat{T}\cdot \widehat{B})^\star \geq \min_{\{\lambda_k\}_{k=1}^N} \left( \min_{\{\Delta t_k\}_{k=1}^N } T_{\max} \min_{\{\Delta t_k\}_{k=1}^N } B_{\max} \right)
\end{equation}

We first ask how pulse duration $T_{\max}$ and bandwidth $B_{\max}$ change in terms of $\{\Delta t_k\}$. Fig.~\ref{fig:T_B_vs_ln(eta)} illustrates $T_\mathrm{max}$ and $B_\mathrm{max}$ in terms of $\Delta t_2$ for two different exemplary $2-$solitons. We observe that the smallest $T_\mathrm{max}$ is attained at $\Delta t_2=0$, whereas $B_\mathrm{max}$ reaches its maximum. On the other hand, $B_\mathrm{max}$ reaches its minimum for $\Delta t_2 \to \infty$, for which $T_\mathrm{max}$ grows unboundedly. 
We observe this behavior for different eigenvalues and soliton orders $N$. The intuitive reason is, that each multi-soliton splits into its first order components when $|\Delta t_k - \Delta t_l| \to \infty \quad \forall k>l$; meaning $N$ seperate 1-solitons without any interaction. As $|\Delta t_k - \Delta t_l|$ decreases, the distance between these 1-solitons decreases. This reduces the overall pulse duration $T_\mathrm{max}$, however the bandwidth $B_\mathrm{max}$ increases due to higher nonlinear interaction. We use these observations to estimate  
\begin{align}\label{eq:TB_limitCases}
\min_{\{\Delta t_k\}_{k=1}^N } T_{\max} \qquad & \mathrm{when} \qquad \Delta t_k = \Delta t \quad \forall k=1,\dots,N	\nonumber
\\
\min_{\{\Delta t_k\}_{k=1}^N } B_{\max} \qquad & \mathrm{when} \qquad \Delta t_k - \Delta t_l \to \infty \quad \forall k>l
\end{align}

In the following, we give analytical approximations for $T_\mathrm{max}$ and $B_\mathrm{max}$ in these respective limit cases. Based on the Darboux transform Alg.~\ref{alg:DT2}, we derive the following approximations on pulse duration given in Thm.~\ref{th:T_imag} and Thm.~\ref{th:T_real}. For detailed derivations see App.~\ref{sec:proof_Th12}.

\begin{theorem}\label{th:T_imag}
Consider an $N-$soliton with eigenvalues $\lambda_k=j\sigma_k$, ${\sigma_k \in \mathbb{R}^+}$, and spectral amplitude scaling $\eta_k=\exp(2\sigma_k \Delta t)$  (${\Delta t_k = \Delta t}$), $1\leq k \leq N$. We assume (w.l.o.g) $\sigma_N=\underset{k}{\min}\,{\sigma_k}$. This soliton's pulse duration $T_\mathrm{max}(\varepsilon)$, according to Def.~\ref{def:TB_def_energy}, can be well approximated by 
\begin{align}\label{eq:Tmin_imag_const}
 T_{\max}(\varepsilon) \approx T_\mathrm{lim,im} =
 \frac{1}{2\sigma_N}\left( \ln\left(\frac{2}{\varepsilon}\frac{\sigma_N}{\sum_{k=1}^{N}\sigma_k}\right) + 2\sum_{k=1}^{N-1} \ln \left(\left|\frac{\sigma_N +\sigma_k}{\sigma_N-\sigma_k}\right|\right)\right)
\end{align}
The approximation becomes tight for $\varepsilon \to 0$. 
\end{theorem}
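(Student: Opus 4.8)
The plan is to reduce the statement to an explicit computation of the exponential tails of $|q(t)|$ and then match those tails against the energy budget $\varepsilon E_\mathrm{total}$. First I would invoke Thm.~\ref{th:SolProperties}: since all $\Delta t_k$ share the common value $\Delta t$, property (ii) shows that $\Delta t$ only translates the pulse and leaves $T$ unchanged, so I may take $\Delta t = 0$, i.e.\ $\eta_k = 1$. Property (v*) then guarantees a symmetric pulse $q(t)=q(-t)$, so that $T_-=-T_+$ and $T_{\max}=2T_+$; this halves the problem to understanding the single tail $t\to+\infty$.

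The core step is to extract the leading asymptotics of $|q(t)|$ from Alg.~\ref{alg:DT2}. With $\lambda_k=j\sigma_k$ and $\eta_k=1$, a direct substitution into \eqref{eq:Darboux_init} gives $|\rho_k^{(0)}(t)|=e^{-2\sigma_k t}$, so each seed decays at its own rate and, because $\sigma_N=\min_k\sigma_k$, the component $\rho_N$ decays slowest. I would then track $\rho_N$ through the update \eqref{eq:rho_update}: as $t\to+\infty$ every $|\rho_{j+1}^{(j)}|\to 0$, so the denominators $1+|\rho_{j+1}^{(j)}|^2\to 1$ and the mixing term $\rho_{j+1}^{(j)}$ (which decays like $e^{-2\sigma_{j+1}t}$, strictly faster than $e^{-2\sigma_N t}$) is negligible at leading order, leaving the clean multiplicative recursion $\rho_N^{(j+1)}\approx\frac{\lambda_N-\lambda_{j+1}^*}{\lambda_N-\lambda_{j+1}}\rho_N^{(j)}=\frac{\sigma_N+\sigma_{j+1}}{\sigma_N-\sigma_{j+1}}\rho_N^{(j)}$. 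Feeding the result into the signal update \eqref{eq:sig_update} at the last iteration (and noting that all earlier additions decay faster) yields
\begin{equation*}
|q(t)|\;\approx\;A\,e^{-2\sigma_N t},\qquad A=4\sigma_N\prod_{k=1}^{N-1}\left|\frac{\sigma_N+\sigma_k}{\sigma_N-\sigma_k}\right|,
\end{equation*}
with $A$ independent of the phases $\varphi_k$, since the phase factors enter only the argument of $\rho_N$ and not $|\rho_N|$; this is precisely why the maximization over $\varphi_k$ in \eqref{eq:TmaxBmax} drops out of the final expression.

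Finally I would convert this tail into a pulse duration via Def.~\ref{def:TB_def_energy}. By symmetry each tail must carry half of the excluded energy, so $\int_{T_+}^{\infty}A^2 e^{-4\sigma_N t}\,dt=\tfrac12\,\varepsilon E_\mathrm{total}$ with $E_\mathrm{total}=\sum_{k=1}^{N}4\sigma_k$. Evaluating the integral as $\frac{A^2}{4\sigma_N}e^{-4\sigma_N T_+}$, solving for $T_+$, substituting $A$, and using $T_{\max}=2T_+$ reproduces \eqref{eq:Tmin_imag_const} after taking logarithms.

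The main obstacle is the asymptotic bookkeeping in the middle step: I must argue that the neglected pieces --- the denominators $1+|\rho|^2$, the cross terms $\rho_{j+1}^{(j)}$ and $\rho_{j+1}^{*(j)}\rho_k^{(j)}$ in \eqref{eq:rho_update}, and all contributions added before the final iteration --- are genuinely $o(e^{-2\sigma_N t})$, so that the single-exponential model is exact to leading order. This is also where the error is controlled: these corrections vanish faster than the retained tail as $T_+\to\infty$, which is exactly the regime $\varepsilon\to 0$, yielding the claimed tightness.
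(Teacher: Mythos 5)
Your proposal is correct and follows essentially the same route as the paper's own proof in Appendix~\ref{sec:proof_Th12} (Lemmas~\ref{lem:Darboux_iteration_limit}--\ref{lem:sig_tails_imag}): linearize the Darboux update \eqref{eq:rho_update} in the tail, isolate the dominant $e^{-2\sigma_N t}$ component with coefficient $4\sigma_N\prod_{k=1}^{N-1}\left|\frac{\sigma_N+\sigma_k}{\sigma_N-\sigma_k}\right|$, and solve the half-energy condition $\frac{\varepsilon}{2}E_\mathrm{total}$ per tail. Your only deviation---normalizing $\eta_k=1$ via property (ii) and invoking the symmetry property (v*) of Thm.~\ref{th:SolProperties} to treat a single tail, instead of the paper's computation of both tails with factors $\eta_N^{\pm 1}$ (which cancel)---is a cosmetic simplification fully consistent with the paper's argument.
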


\begin{theorem}\label{th:T_real}
Consider an $N-$soliton with eigenvalues $\lambda_k=j\sigma+\omega_k$ parallel to the real axis ($\sigma \in \mathbb{R}^+, \omega_k \in \mathbb{R}$) and spectral amplitude scaling $\eta_k$. We denote $A_r=\left|\sum_{k=r}^{N}  a_{r,k}\right|$ where $a_{r,k}$ are calculated from ~\eqref{eq:rho_update} in Alg.~\ref{alg:DT2} such that $\rho_k^{(k-1)}(t)=\sum_{r=1}^k a_{r,k} \rho_r^{(0)}(t)$. This soliton's pulse duration $T_\mathrm{max}(\varepsilon)$, according to Def.~\ref{def:TB_def_energy}, can be well approximated by \eqref{eq:T_real}. The minimization of this approximation is attained when $\eta_k=\eta, \quad \forall 1\leq k \leq N$ and we denote $T_\mathrm{lim,re}=\underset{\eta_k}{\min}\, T_\mathrm{max}(\varepsilon)$. 
\begin{equation}\label{eq:T_real}
T_\mathrm{max}(\varepsilon) \approx \frac{1}{2\sigma}\ln\left( \frac{2}{N \varepsilon} \left(\sum_{r=1}^{N} \eta_r A_r \right) \left(\sum_{r=1}^{N} \frac{1}{\eta_r} A_r \right)\right)
\end{equation}

\end{theorem}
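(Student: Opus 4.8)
The plan is to determine the exponential envelope of the worst-case (maximal-phase) pulse in each of its two tails, and then to convert the energy-truncation condition of Def.~\ref{def:TB_def_energy} into the two cut points $T_\pm$. First I would analyse the right tail $t\to+\infty$. Since $\lambda_k=j\sigma+\omega_k$, every seed $\rho_r^{(0)}(t)$ in \eqref{eq:Darboux_init} carries the common factor $e^{2j\lambda_r t}=e^{-2\sigma t}e^{2j\omega_r t}$ and, after substituting \eqref{eq:Qd_eta}--\eqref{eq:Qd_0}, has modulus exactly $\eta_r e^{-2\sigma t}$; hence all seeds vanish at the same rate. In this regime $|\rho_{j+1}^{(j)}|^2\to 0$, so the denominators in \eqref{eq:rho_update} tend to $1$ and the recursion becomes \emph{linear} in the $\rho$'s. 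Solving this linear recursion yields precisely the stated representation $\rho_k^{(k-1)}(t)=\sum_{r=1}^{k}a_{r,k}\rho_r^{(0)}(t)$, with the $a_{r,k}$ depending only on the eigenvalues. Feeding this into the small-$\rho$ form of the update \eqref{eq:sig_update}, where $2j(\lambda_{j+1}-\lambda_{j+1}^*)=-4\sigma$ and $1+|\rho|^2\approx 1$, telescopes to $q(t)\approx -4\sigma\sum_{k=1}^{N}\rho_k^{*(k-1)}(t)=-4\sigma\sum_{r=1}^{N}\bigl(\sum_{k=r}^{N}a_{r,k}^*\bigr)\rho_r^{*(0)}(t)$.

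Collecting the common $e^{-2\sigma t}$, the right-tail modulus is $|q(t)|\approx 4\sigma e^{-2\sigma t}\bigl|\sum_{r}(\sum_{k\ge r}a_{r,k}^*)\eta_r e^{-j\psi_r}e^{-2j\omega_r t}\bigr|$, where $\psi_r$ absorbs $\varphi_r$ and the fixed eigenvalue phases. Because the $\varphi_r$ (hence $\psi_r$) are modulated independently, the maximisation over phase combinations defining $T_{\max}$ aligns all $N$ terms constructively, giving the worst-case envelope $4\sigma e^{-2\sigma t}S_+$ with $S_+=\sum_r A_r\eta_r$ and $A_r=|\sum_{k=r}^N a_{r,k}|$. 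For the left tail I would invoke property~(v) of Thm.~\ref{th:SolProperties}: the map $\eta_k\mapsto 1/\eta_k$, $\omega_k\mapsto-\omega_k$ sends $q(t)\mapsto q(-t)$ and leaves the $A_r$ unchanged, so the $t\to-\infty$ envelope is $4\sigma e^{2\sigma t}S_-$ with $S_-=\sum_r A_r/\eta_r$.

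Next I would impose the truncation. With total energy $\sum_k 4\sigma_k=4N\sigma$, integrating the squared envelopes gives out-of-interval energies $4\sigma S_+^2 e^{-4\sigma T_+}$ and $4\sigma S_-^2 e^{4\sigma T_-}$. Writing these as $\varepsilon_+\cdot 4N\sigma$ and $\varepsilon_-\cdot 4N\sigma$ with $\varepsilon_++\varepsilon_-=\varepsilon$, solving for $T_\pm$ and minimising $T=T_+-T_-$ over the split forces $\varepsilon_+=\varepsilon_-=\varepsilon/2$, and substitution yields exactly \eqref{eq:T_real}. The claimed $\eta$-minimiser then follows from Cauchy--Schwarz: $S_+S_-=(\sum_r\eta_r A_r)(\sum_r A_r/\eta_r)\ge(\sum_r A_r)^2$, with equality iff all $\eta_r$ are equal, so $T_\mathrm{lim,re}$ is attained at $\eta_k=\eta$.

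The main obstacle is rigour in the two approximation steps. First, one must control the linearisation error of \eqref{eq:rho_update}--\eqref{eq:sig_update}: the neglected $|\rho|^2$ terms are $O(e^{-4\sigma t})$ relative to the leading $O(e^{-2\sigma t})$ contribution, so they are subdominant in the tail and their effect on $T_\pm$ vanishes as $\varepsilon\to 0$; this is where the ``well approximated'' qualifier and tightness for small $\varepsilon$ originate. Second, the phase-alignment argument is strictly exact only at a single $t$, whereas the residual $e^{-2j\omega_r t}$ factors make the prefactor oscillate across the integration range; I would argue that over the narrow range of $t$ carrying the truncated tail energy these oscillations perturb the envelope only to subleading order, leaving $S_\pm$ as the effective amplitudes. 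The remaining energy integrals and the Cauchy--Schwarz step are routine.
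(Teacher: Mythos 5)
Your proposal is correct and follows the paper's proof in all essentials: the same tail linearization of the Darboux recursion \eqref{eq:rho_update} yielding $\rho_k^{(k-1)}(t)=\sum_{r=1}^{k}a_{r,k}\rho_r^{(0)}(t)$, the same phase-alignment envelope $4\sigma e^{\mp 2\sigma t}\sum_r \eta_r^{\pm 1}A_r$ (the paper's Lemmas~\ref{lem:Darboux_iteration_limit}, \ref{lem:sig_tails_gen} and \ref{lem:sig_tails_real}), the same equal-split energy truncation with $E_\mathrm{tot}=4N\sigma$ (which you in fact justify slightly more explicitly, by optimizing over the split $\varepsilon_+ + \varepsilon_-=\varepsilon$ where the paper only appeals to the equal tail slopes), and the same Cauchy--Schwarz step for the minimizer $\eta_k=\eta$. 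The one genuine deviation is the left tail: the paper handles $t\to-\infty$ directly, observing that $q^{(N)}(t)\to -4\sum_k \sigma_k/\rho_k^{(k-1)}(t)$ and that $1/\rho_k^{(j)}$ satisfies a linear recursion \eqref{eq:rho_update_t_-infty_general} with the \emph{same} coefficients as \eqref{eq:rho_update_t_infty_general}, so $\tilde a_{r,k}=a_{r,k}$; you instead transfer the right-tail result via property (v) of Thm.~\ref{th:SolProperties}. Your route is arguably cleaner but leans on the unproved assertion that the $A_r$ are invariant under $\omega_k\mapsto-\omega_k$. This does hold --- the map is $\lambda_k\mapsto-\lambda_k^*$, under which the coefficients $\frac{\lambda_k-\lambda_{j+1}^*}{\lambda_k-\lambda_{j+1}}$ and $-\frac{2j\sigma_{j+1}}{\lambda_k-\lambda_{j+1}}$ in \eqref{eq:rho_update_t_infty_general} go to their complex conjugates, hence $a_{r,k}\mapsto a_{r,k}^*$ and $A_r$ is unchanged --- but you should supply this one-line check, since without it the left-tail amplitude $S_-=\sum_r A_r/\eta_r$ is not established. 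Your closing discussion of the two approximation errors (neglected $O(e^{-4\sigma t})$ terms, and phase alignment being exact only pointwise in $t$) matches the caveats the paper itself makes.
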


As \emph{numerically} observed, the minimum of $B_\mathrm{max}$ is attained for multi-solitons being separated into their first order components \eqref{eq:TB_limitCases} where $\eta_k/\eta_l \to \infty \quad \forall k > l$ (which is equivalent to $\Delta t_k - \Delta t_l \to \infty \quad \forall k>l$). In this case, the pulse is a linear superposition of fundamental solitons. From their available analytical description, we derive the following approximations on bandwidth given in Thm.~\ref{th:B_imag} and Thm.~\ref{th:B_real}. For details see Appx.~\ref{sec:proof_Th34}.

\begin{theorem}\label{th:B_imag}
Consider an $N-$soliton with imaginary eigenvalues $\lambda_k=j\sigma_k$, $\sigma \in \mathbb{R}^+$, separated into its first order components where $\eta_k/\eta_l \to \infty \quad \forall k > l$ and $\sigma_1=\underset{k}{\max} \, \sigma_k$. Its bandwidth $B_{\max}$, according to Def.~\ref{def:TB_def_energy}, can be well approximated by 
\begin{equation}\label{eq:Bmin_real_const_energy}
B_{\max}(\varepsilon) \approx B_\mathrm{lim,im} = \frac{2 \sigma_1}{\pi^2} \ln\left(\frac{2}{\varepsilon}\frac{\sigma_1}{\sum_{k=1}^{N}\sigma_k}\right)
\end{equation}
\end{theorem}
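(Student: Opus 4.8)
The plan is to exploit the fact that in the stated limit the $N$-soliton degenerates into a linear superposition of $N$ fundamental solitons, each of which has a closed-form time and frequency profile. First I would write the separated pulse as $q(t)\approx\sum_{k=1}^{N} 2\sigma_k\,\mathrm{sech}\!\big(2\sigma_k(t-\Delta t_k)\big)e^{j\varphi_k}$, where the translation of the condition $\eta_k/\eta_l\to\infty$ via \eqref{eq:DeltaT_eta} is exactly $|\Delta t_k-\Delta t_l|\to\infty$. Taking the Fourier transform of each component gives a $\mathrm{sech}$ in frequency, $Q_k(f)=\pi\,\mathrm{sech}\!\big(\tfrac{\pi^2 f}{2\sigma_k}\big)e^{j\varphi_k}e^{-j2\pi f\Delta t_k}$; since all eigenvalues are purely imaginary (no frequency offset $\omega_k$), every component spectrum is centred at $f=0$ and only its width scales with $\sigma_k$, so the broadest spectrum belongs to $\sigma_1=\max_k\sigma_k$.

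Next I would argue that the spectral energy density governing the bandwidth is $\sum_{k}|Q_k(f)|^2$. The squared magnitude $|Q(f)|^2$ contains cross terms of the form $2\,\mathrm{Re}\{Q_k Q_l^*\}$ carrying an oscillation $e^{-j2\pi f(\Delta t_k-\Delta t_l)}$, whose period in $f$ vanishes as $|\Delta t_k-\Delta t_l|\to\infty$; when integrated over any frequency band these average to a negligible contribution (Riemann--Lebesgue), so the band-energy is determined by $\sum_k|Q_k(f)|^2$ alone, independently of the phases $\varphi_k$. Using $\int|Q_k(f)|^2\,df=4\sigma_k$ recovers the total energy $E_\mathrm{total}=4\sum_k\sigma_k$ stated in Sec.~\ref{sec:def}, and because each $|Q_k|^2$ is even in $f$ the minimal band is symmetric, $[-B_+,B_+]$ with $B_{\max}=2B_+$.

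Finally I would locate the band edge from the tail behaviour. Since $\mathrm{sech}^2(x)\approx 4e^{-2|x|}$ for large $|x|$, the component $|Q_k(f)|^2\approx 4\pi^2 e^{-\pi^2|f|/\sigma_k}$ decays slowest for the largest $\sigma_k$, so the out-of-band energy is dominated by the $k=1$ term. Imposing $2\int_{B_+}^{\infty}4\pi^2 e^{-\pi^2 f/\sigma_1}\,df=\varepsilon\,E_\mathrm{total}$ gives $8\sigma_1 e^{-\pi^2 B_+/\sigma_1}=4\varepsilon\sum_k\sigma_k$, and solving for $B_+$ and doubling yields exactly \eqref{eq:Bmin_real_const_energy}. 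The main obstacle is justifying the two approximations that make the result clean: dropping the oscillatory cross terms (legitimate only as $|\Delta t_k-\Delta t_l|\to\infty$) and retaining only the single slowest-decaying exponential tail (exact only as $\varepsilon\to 0$). I would handle this by verifying that at $f=B_+$ both the neglected cross terms and the sub-dominant component tails are of strictly smaller exponential order than $e^{-\pi^2 B_+/\sigma_1}$, which holds because $\sigma_l<\sigma_1$ forces $|Q_1 Q_l|$ and $|Q_l|^2$ to decay faster than $|Q_1|^2$, so the error terms are higher order in $\varepsilon$ and the approximation tightens in the stated limit.
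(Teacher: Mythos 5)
Your proposal is correct and follows essentially the same route as the paper: in the separation limit the pulse is a linear superposition of fundamental solitons with $\mathrm{sech}$ spectra (the paper's Lemma~\ref{lem:sep_sol_spectrum}), the out-of-band energy is dominated by the slowest-decaying tail set by $\sigma_1=\max_k\sigma_k$, and solving $\frac{\varepsilon}{2}E_\mathrm{tot}\approx 4\sigma_1\exp\left(-\frac{\pi^2 B}{2\sigma_1}\right)$ yields \eqref{eq:Bmin_real_const_energy}. The only divergence is how you dispose of the phase dependence required by the maximization in $B_{\max}$ --- you drop the oscillatory cross terms in $|Q(f)|^2$ via a Riemann--Lebesgue argument, while the paper instead works with the phase envelope $\widehat{|Q(f)|}=\pi\sum_k\mathrm{sech}\left(\frac{\pi^2 f}{2\sigma_k}\right)$ obtained by coherently aligning the $\varphi_k$ at each frequency --- but both reduce to the same single dominant $\mathrm{sech}$ tail at leading exponential order, so this is an immaterial difference in justification, not in approach.
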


\begin{theorem}\label{th:B_real}
Consider an $N-$soliton with eigenvalues ${\lambda_k=j\sigma+\omega_k}$ parallel to the real axis ($\sigma \in \mathbb{R}^+, \omega_k \in \mathbb{R}$) separated into its first order components where $\eta_k/\eta_l \to \infty, \quad \forall k > l$. Its bandwidth $B_{\max}$, according to Def.~\ref{def:TB_def_energy}, can be well approximated by 
\begin{align}\label{eq:Bmin_real_const_energy}
 B_{\max}(\varepsilon) \approx	B_\mathrm{lim,re} = 
 \frac{2\sigma}{\pi^2} \left( \ln\left(\frac{2}{\varepsilon N}\right) + \ln \left( \sum_{k=1}^N \exp(\frac{\pi \omega_k}{2 \sigma})\cdot \sum_{k=1}^N \exp(-\frac{\pi \omega_k}{2 \sigma})\right) \right)
\end{align}
\end{theorem}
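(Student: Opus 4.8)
The plan is to reduce the statement to a linear superposition of fundamental solitons and then to evaluate the energy contained in the two spectral tails in closed form. By hypothesis the soliton is taken in the fully separated limit $\eta_k/\eta_l\to\infty$ ($\forall k>l$), which by \eqref{eq:DeltaT_eta} corresponds to $\Delta t_k-\Delta t_l\to\infty$, i.e.\ the $N$ first-order components are pulled infinitely far apart in time. In that regime the Darboux recursion of Alg.~\ref{alg:DT2} collapses, locally around each component, to a single fundamental soliton, so that $q(t)\approx\sum_{k=1}^N q_k(t)$ with $q_k(t)=2\sigma\,\operatorname{sech}\!\big(2\sigma(t-\Delta t_k)\big)\,e^{-2j\omega_k t+j\varphi_k}$. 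The first step is therefore to record the Fourier transform of each component, which (in angular frequency $\Omega$) is again a $\operatorname{sech}$ of \emph{common} width set by $\sigma$ but \emph{shifted} by the carrier $\omega_k$, namely $|Q_k(\Omega)|=\pi\,\operatorname{sech}\!\big(\tfrac{\pi(\Omega+2\omega_k)}{4\sigma}\big)$, with exponential tails $|Q_k(\Omega)|\approx 2\pi\,e^{-\pi|\Omega|/4\sigma}\,e^{\mp\pi\omega_k/2\sigma}$ as $\Omega\to\pm\infty$. This is exactly the single-soliton input also used to prove Thm.~\ref{th:B_imag}; the only new feature here is the $\omega_k$-dependent frequency shift, so the same machinery can be reused.

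Next I would assemble the two tails. Because $B_{\max}$ in \eqref{eq:TmaxBmax} is a \emph{maximum} over the phase combinations $\{\varphi_k\}$, the worst case (widest band) is attained when the exponentially small tail contributions add coherently; in each far tail one may then replace $|Q(\Omega)|$ by the magnitude sum $\sum_k|Q_k(\Omega)|$. This is precisely the step that produces the amplitude-level weights $\sum_k e^{\mp\pi\omega_k/2\sigma}$ rather than energy-level ones: the right tail is governed by $\big(\sum_k e^{-\pi\omega_k/2\sigma}\big)^2$ and the left tail by $\big(\sum_k e^{+\pi\omega_k/2\sigma}\big)^2$. Integrating the resulting $\operatorname{sech}^2$/exponential profiles from the band edges outward gives, for a right edge $\Omega_+$, a cumulative tail energy $\approx 8\pi\sigma\big(\sum_k e^{-\pi\omega_k/2\sigma}\big)^2 e^{-\pi\Omega_+/2\sigma}$, and symmetrically for $\Omega_-$, while the total energy is the ($z$-invariant) additive value $\sum_{k}4\sigma=4N\sigma$.

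Given these expressions, the third step is to apply Def.~\ref{def:TB_def_energy}: I would impose that the two tails together carry a fraction $\varepsilon$ of the total energy and choose the split between them so as to minimise $\Omega_+-\Omega_-$. A one-line Lagrange/KKT argument shows the minimiser equalises the edge spectral densities, hence splits the missing energy into two equal halves, giving $\Omega_+=\tfrac{2\sigma}{\pi}\ln\!\big(\tfrac{2}{N\varepsilon}(\sum_k e^{-\pi\omega_k/2\sigma})^2\big)$ and the mirror image for $\Omega_-$. Adding the two edges, converting angular frequency to ordinary frequency via $B=(\Omega_+-\Omega_-)/2\pi$, and collecting the logarithms reproduces $B_\mathrm{lim,re}=\tfrac{2\sigma}{\pi^2}\big(\ln\tfrac{2}{\varepsilon N}+\ln(\sum_k e^{\pi\omega_k/2\sigma}\sum_k e^{-\pi\omega_k/2\sigma})\big)$. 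A useful sanity check along the way is the specialisation $\omega_k\equiv 0$, where the two sums become $N$ and the formula collapses to $\tfrac{2\sigma}{\pi^2}\ln\tfrac{2N}{\varepsilon}$, consistent with $N$ coherently stacked fundamental spectra.

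The main obstacle I anticipate is justifying the coherent, magnitude-additive treatment of the tails. In the strict $|\Delta t_k-\Delta t_l|\to\infty$ limit the cross terms of $|\sum_k Q_k|^2$ oscillate arbitrarily fast in $\Omega$ and wash out under integration, which would instead yield the \emph{energy}-additive weights $\sum_k e^{\mp\pi\omega_k/\sigma}$; it is the maximisation over $\{\varphi_k\}$ that restores the worst-case constructive (amplitude-additive) tail and hence the stated weights. Making this interchange of "separated limit'' and "max over phases'' precise — and bounding the sub-leading $\operatorname{sech}$/$\tanh$ corrections so that the approximation becomes tight as $\varepsilon\to 0$ — is where the real work lies; the remaining integrals and the edge optimisation are routine. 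For this reason I would present \eqref{eq:Bmin_real_const_energy} as a controlled approximation, in line with the "well approximated'' and "numerically observed'' qualifiers, rather than as an exact identity.
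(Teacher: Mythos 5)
Your proposal is correct and follows essentially the same route as the paper's proof: it reduces the separated multi-soliton to a linear superposition of fundamental $\operatorname{sech}$ spectra (the paper's Lemma~\ref{lem:sep_sol_spectrum}), replaces the maximization over $\{\varphi_k\}$ by the magnitude-sum envelope with tail weights $\sum_k \exp(\mp\pi\omega_k/2\sigma)$, and solves the two-tail energy condition with an equal $\varepsilon/2$ split to obtain $B_+$ and $B_-$. Your Lagrange argument for the equal split and your explicit flagging of the coherent-versus-incoherent tail addition are minor refinements of steps the paper handles by the equal-slope symmetry of the tails and by the phase-alignment remark in the proof of Lemma~\ref{lem:sep_sol_spectrum}, respectively.
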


Note that the two limiting cases of minimal $T_\mathrm{max}$ and minimal $B_\mathrm{max}$ can not be attained simultaneously (compare Fig.~\ref{fig:T_B_vs_ln(eta)}). However, ${\underset{\Delta t_k}{\min}\, T_\mathrm{max}\cdot \underset{\Delta t_k}{\min}\, B_\mathrm{max}}$ in \eqref{eq:minTminB2} is a lower bound on the achievable $T_\mathrm{max}B_\mathrm{max}$ for which we can given an analytical expression using the approximations in Thm.~\ref{th:T_imag},\ref{th:T_real},\ref{th:B_imag},\ref{th:B_real}. We approximate ${\underset{\Delta t_k}{\min}\, T_\mathrm{max}\cdot \underset{\Delta t_k}{\min}\, B_\mathrm{max}}$ as $T_\mathrm{lim,im/re} \cdot B_\mathrm{lim,im/re}$ and thus \eqref{eq:minTminB2} as
\begin{equation}
\left(\widehat{T}\cdot \widehat{B}\right)^\star  \gtrapprox \min_{\{\lambda_k\}_{k=1}^N} \left(T_\mathrm{lim,im/re} \cdot B_\mathrm{lim,im/re} \right).
\end{equation}

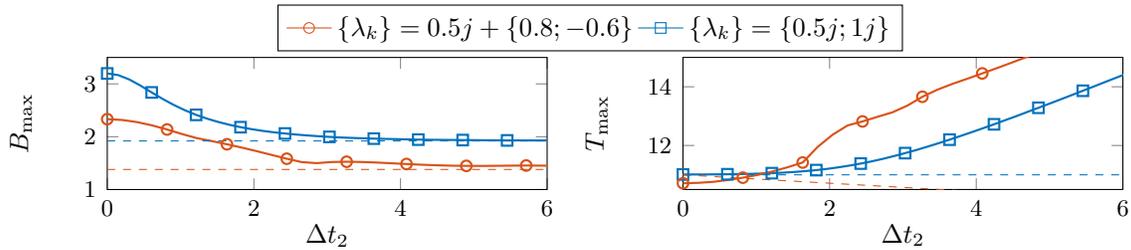
\begin{figure}
\vspace{2mm}
\hspace{-4mm}
\begin{tikzpicture}[baseline=(current axis.south)]
\begin{axis}[ticklabel style={font=\footnotesize},width=0.45\textwidth,height=0.15\textheight,xmin=0,xmax=6,ymin=1,ymax=3.5,y label style={yshift=-0.2em},label style={font=\small},ylabel={$B_\mathrm{max}$},xlabel={$\Delta t_2$},x label style={yshift=0.5em},legend style={font=\tiny},title style={font=\small}]     

    \addplot[forget plot,thick,myblue,mark=square,mark repeat=4,width=\linewidth]table {B_vs_DeltaT_imagConst.dat};

    \addplot[forget plot,thick,myred,mark=o,mark repeat=4,width=\linewidth]table {B_vs_DeltaT_realConst.dat};
    
    \addplot[forget plot,mark=none, myblue, samples=100, dashed] coordinates {(0,1.92) (7,1.92)};
    
    \addplot[forget plot,mark=none, myred, samples=100, dashed] coordinates {(0,1.38) (7,1.38)};

\end{axis}
\end{tikzpicture}
\begin{tikzpicture}[baseline=(current axis.south)]
\begin{axis}[ticklabel style={font=\footnotesize},width=0.45\textwidth,height=0.15\textheight,xmin=0,xmax=6,ymin=10.5,ymax=15,y label style={yshift=-0.2em},label style={font=\small},ylabel={$T_\mathrm{max}$},xlabel={$\Delta t_2$},x label style={yshift=0.5em},legend style={font=\small,at={(0.5,1.4)},overlay,legend columns=2},title style={font=\small}]     

    \addplot[forget plot,thick,myblue,mark=square,mark repeat=4,,width=\linewidth]table {T_vs_DeltaT_imagConst.dat};

    \addplot[forget plot,thick,myred,mark=o,mark repeat=4,width=\linewidth]table {T_vs_DeltaT_realConst.dat};

\addplot[forget plot,mark=none, myblue, samples=100, dashed] coordinates {(0,11) (7,11)};
\addplot[forget plot,mark=none, myred, samples=100, dashed] coordinates {(0,11) (7,10)};
     
\addlegendimage{line legend,mark=o,myred}
\addlegendentry{$\{\lambda_k\}=0.5j+\{0.8;-0.6\}$}
\addlegendimage{line legend,mark=square,myblue}
\addlegendentry{$\{\lambda_k\}=\{0.5j;1j\}$}

\end{axis}
\end{tikzpicture}
\caption{Pulse duration $T_\mathrm{max}$ and bandwidth $B_\mathrm{max}$ for exemplary second order solitons as a function of $\Delta t_2$ with $\Delta t_1=0$.}
\label{fig:T_B_vs_ln(eta)}
\end{figure}

For the time-bandwidth product per eigenvalue, following definition \eqref{eq:TBP_per_eigenvalue}, we have
\begin{equation}\label{eq:TBPmin_approx}
\overline{T \cdot B}_N \gtrapprox \frac{1}{N} \min_{\{\lambda_k\}_{k=1}^N} \left( T_\mathrm{lim,im/re} \cdot B_\mathrm{lim,im/re} \right).
\end{equation}
The above lower bound \eqref{eq:TBPmin_approx} can be minimized numerically. We normalize again by $\overline{T \cdot B}_1$ to indicate the gain compared to a first order pulse. In Fig.~\ref{fig:TBP_LB}, the blue "$\square$" line shows the minimization result for imaginary eigenvalues while the red "$\circ$" line corresponds to the case with eigenvalues parallel to the real axis.
The brute-force optimized $\overline{T \cdot B}_N / \overline{T \cdot B}_1$ in Sec.~\ref{sec:num_pulse_opt} are marked correspondingly in Fig.~\ref{fig:TBP_LB}. Non-filled marks indicate the optimization result for defining pulse duration and bandwidth according to Def.~\ref{def:TB_def_energy}, whereas filled marks correspond to Def.\ref{def:TB_def_threshold}. The analytical result indicates that the time-bandwidth product per eigenvalue can be decreased for higher soliton orders, however the improvement is small and only slowly growing in $N$. We observe the same behavior for the optimum soliton pulses obtained from brute-force optimizaiton. However, the gap between analytical approximation and practically achieved values seems to become larger for higher $N$. One can observe the eigenvalue constellation with non-zero real part potentially achieving a higher gain.
However, to be competitive to classical Nyquist-based transmission schemes, multi-solitons should achieve a time-bandwidth product in the range of $1$ (time-bandwidth product of a classical Nyquist pulse). But this target value, indicated by the black dashed line in Fig.~\ref{fig:TBP_LB}, appears to have a large gap to the values achievable by the soliton pulses.

\begin{figure}
\centering
\vspace{5mm}
\begin{tikzpicture}[baseline=(current axis.south)]
\begin{axis}[ticklabel style={font=\footnotesize},width=0.8\textwidth,height=0.2\textheight,xmin=1,xmax=10,ymin=0,ymax=1,y label style={yshift=0em},label style={font=\small},ylabel={$\overline{TB}_N/\overline{TB}_1$},xlabel={$N$},x label style={yshift=0.5em},legend style={font=\small,at={(1,1.39)},overlay,legend columns=3},title style={font=\small}]

    \addplot[forget plot,thick,myred,mark=o,width=2\linewidth]table {TBP_per_EV_LB_real_const_energy.dat};

     \addplot[forget plot,thick,myblue,mark=square,width=2\linewidth]table {TBP_per_eigenvalue_LB_imag_energy.dat};

    \addplot[forget plot,mark=none, black, samples=100, dashed] coordinates {(0,0.1) (20,0.1)};
    
    \addplot[mark=square,myblue,forget plot] coordinates {(2,0.89)};	 
    \addplot[mark=o,myred,forget plot] coordinates {(2,0.74)};
    \addplot[mark=square,myblue,forget plot] coordinates {(3,0.84)};	     	
    \addplot[mark=o,myred,forget plot] coordinates {(3,0.71)};  
                    
    \addplot[mark=square*,thick,myblue,forget plot] coordinates {(2,0.86)};
    \addplot[mark=square*,thick,myblue,forget plot] coordinates {(3,0.855)};
    \addplot[mark=*,myred,forget plot] coordinates {(2,0.75)};        
    \addplot[mark=*,thick,myred,forget plot] coordinates {(3,0.72)};

\addlegendimage{line legend,mark=square,myblue}
\addlegendentry{$\lambda_k=j\sigma_k$, energy}
\addlegendimage{only marks,myblue,mark=square}
\addlegendentry{$\lambda_k=j\sigma_k$, energy}
\addlegendimage{only marks, mark=square*,myblue}
\addlegendentry{$\lambda_k=j\sigma_k$, threshold}

\addlegendimage{line legend,mark=o,myred}
\addlegendentry{$\lambda_k=j\sigma+\omega_k$, energy}
\addlegendimage{only marks,mark=o,myred}
\addlegendentry{$\lambda_k=j\sigma+\omega_k$, energy}
\addlegendimage{only marks,mark=*,myred}
\addlegendentry{$\lambda_k=j\sigma+\omega_k$, threshold}

\end{axis}
\end{tikzpicture}

\caption{Achievable gain in time-bandwidth product per eigenvalue from numerical pulse optimization for $N=2,3$: $T$ and $B$ defined according to Def.~\ref{def:TB_def_energy} (non-filled marks) or Def.~\ref{def:TB_def_threshold} (filled marks) with $\varepsilon=10^{-4}$. Lines show the analytical lower bound estimation for different soliton orders $N$. The black dashed line indicates the necessary gain to achieve the time-bandwidth product of a Nyquist pulse. }
\label{fig:TBP_LB}
\end{figure}
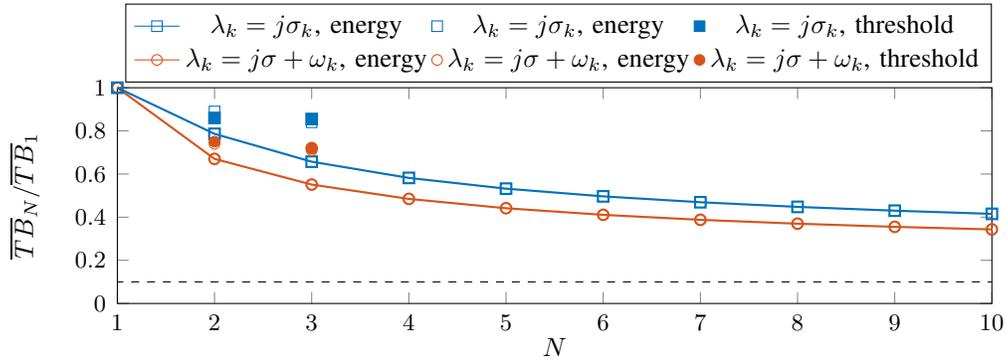

\section{Conclusion}\label{sec:conclusion}

We studied the pulse duration and bandwidth of multi-solitons for a transmission scenario where the phases of spectral amplitudes are modulated. We define the time-bandwidth product such that it takes into account the pulse variations due to modulation and propagation. For second and third order solitons, we numerically optimized the location of eigenvalues and magnitudes of spectral amplitudes in order to minimize the time-bandwidth product. The optimized multi-soliton pulses turn out to look similar to a train of first order pulses. This shape reduces the nonlinear interaction and thus the bandwidth expansion. These optimum pulses decrease the time-bandwidth product per degree of freedom (eigenvalue). However the improvement is slow in the soliton order $N$ and quite small.
For arbitrary solitons, we generally identify two limit cases where either pulse duration or bandwidth becomes minimal for given eigenvalues. We give analytical approximations on pulse duration and bandwidth in these limit cases and numerically minimize them in terms of eigenvalues. The result is an approximation on the smallest possible time-bandwidth product per degree of freedom as a function of the soliton order $N$. One should note that the maximization of pulse duration and bandwidth over all phase combinations of spectral amplitudes is a worst case assumption that simplifies the optimization and clearly increases the time-bandwidth product result. Better results may be achievable by, e.g., considering only a finite number of phases and tuning magnitudes of spectral amplitudes accordingly.

\appendices
\section{Proof of Theorem \ref{th:SolProperties}}\label{sec:sol_prop_proof}

Properties (i) to (iv) are already shown in \cite{yousefi2014nft} using the Zakharov-Shabat system \eqref{eq:ZS}. Property (v*) was partially shown in \cite{Haus1985}. In more details, it is shown in \cite{Span2017} that (v) is a necessary and sufficient condition for the pulse symmetry.
Here, we prove all properties using the recursive Darboux transform in Alg.~\ref{alg:DT2}. Using \eqref{eq:Qd_eta}, \eqref{eq:Qd_0}, the initialization in \eqref{eq:Darboux_init} is \eqref{eq:rho_init_etak} with the resulting time domain signal from \eqref{eq:sig_update} given as \eqref{eq:sig_rho}.
\begin{align}
\rho_k^{(0)}(t) & =\eta_k \exp(j\varphi_k) \exp(2j\omega_k t) \exp(-2\sigma_k t).\label{eq:rho_init_etak}
\\
q^{(N)}(t) & = -4 \sum_{k=1}^{N} \sigma_k \frac{\rho_k^{*(k-1)}(t)}{1+\left|\rho_k^{(k-1)}(t)\right|^2}.\label{eq:sig_rho}
\end{align}

Let $\tilde{\rho}_k^{(i)}(t)$ and $\tilde{q}^{(N)}(t)$ denote the transformed $\rho_k^{(i)}(t)$ and, respectively, the transformed $q^{(N)}(t)$, if we apply one of the properties (i) to (vi). First we prove properties (i) to (iv) and (vi):

Applying one of the properties (i) to (iv), the transformed Darboux initializations \eqref{eq:Darboux_init} are
\begin{flalign}
\mathrm{(i)} \qquad \tilde{\rho}_k^{(0)}(t) & = \eta_k \exp(2j\omega_k t) \exp(j\varphi_k-j\varphi_0) \exp(-2\sigma_k t)	= \exp(-j\varphi_0)\rho_k^{(0)}(t)	\label{eq:prop1_init}
\\
\mathrm{(ii)} \qquad \tilde{\rho}_k^{(0)}(t) & = \eta_k \exp(2\sigma_k t_0) 
\exp(j\varphi_k-j2\omega_k t_0+2j\omega_k t) \exp(-2\sigma_k t)	
 = \rho_k^{(0)}(t-t_0)	\label{eq:prop2_init}
\\
\mathrm{(iii)} \qquad \tilde{\rho}_k^{(0)}(t) & =\eta_k \exp(j\varphi_k) \exp(2j\frac{\omega_k}{\sigma_0} t) \exp(-2\frac{\sigma_k}{\sigma_0} t)	
 =\rho_k^{(0)}\left(\frac{t}{\sigma_0}\right)	\label{eq:prop3_init}
\\
\mathrm{(iv)} \qquad \tilde{\rho}_k^{(0)}(t) & =\eta_k \exp(j\varphi_k) \exp(2j(\omega_k-\omega_0) t) \exp(-2 \sigma_k t)	
 =\exp(-2j\omega_0 t)\rho_k^{(0)}(t)	\label{eq:prop4_init}
\\
\mathrm{(vi)} \qquad \tilde{\rho}_k^{(0)}(t) & = \eta_k \exp(-2j\omega_k t) \exp(-j\varphi_k) \exp(-2\sigma_k t)	
 = \rho_k^{*\, (0)}(t)	\label{eq:prop6_init}
\end{flalign}

For the update equation \eqref{eq:rho_update}, it is easy to check that if one of the above relations holds between $\tilde{\rho}_k^{(j)}(t)$ and $\rho_k^{(j)}$ for all ${k=j+1,\dots,N}$, the same relation will hold between $\tilde{\rho}_k^{(j+1)}$ and $\rho_k^{(j+1)}$ for all $k>j+1$. Since the respective relation holds at $j=0$, we conclude that the relation is preserved for any $j=1,\dots,N-1$. Thus, all the above properties hold in particular between $\tilde{\rho}_k^{(k-1)}$ and $\rho_k^{(k-1)}$ for all $k=1,\dots,N$.
Following \eqref{eq:sig_rho}, the transformed signal $\tilde{q}^{(N)}(t)$ is
\begin{equation}\label{eq:sig_rho_tilde}
\tilde{q}^{(N)}(t) = -4 \sum_{k=1}^{N} \sigma_k \frac{\tilde{\rho}_k^{*(k-1)}(t)}{1+\left|\tilde{\rho}_k^{(k-1)}(t)\right|^2}.
\end{equation}
Replacing $\tilde{\rho}_k^{(k-1)}$ in \eqref{eq:sig_rho_tilde} by either of the above (i), (ii), (iii), (iv), (vi), 
one concludes
\begin{multicols}{2}
\begin{itemize}
\item[(i)] Given \eqref{eq:prop1_init}, $\quad \tilde{q}^{(N)}(t) = \exp(j\varphi_0) q^{(N)}(t)$

\item[(ii)] Given \eqref{eq:prop2_init}, $\tilde{q}^{(N)}(t) = q^{(N)}(t-t_0)$

\item[(iii)] Given \eqref{eq:prop3_init}, $\tilde{q}^{(N)}(t) = \frac{1}{\sigma_0} q^{(N)}(\frac{t}{\sigma_0})$

\item[(iv)] Given \eqref{eq:prop4_init}, $\tilde{q}^{(N)}(t) = \exp(2j\omega_0 t) q^{(N)}(t)$

\item[(iv)] Given \eqref{eq:prop4_init}, $\tilde{q}^{(N)}(t) = q^{*\, (N)}(t)$
\end{itemize}
\end{multicols}

Now we prove property (v): By replacing $\{\eta_k\}\rightarrow \{1/\eta_k\}$ and $\{\omega_k\}\rightarrow \{-\omega_k\}$ , the transformed Darboux initialization \eqref{eq:Darboux_init} writes as
\begin{flalign}\label{eq:rho_init_etak_inv}
\mathrm{(v)} \qquad \tilde{\rho}_k^{(0)}(t) = & \frac{1}{\eta_k} \exp(j\varphi_k) \exp(-2j\omega_k t) \exp(-2\sigma_k t)
=\frac{1}{\rho_k^{*(0)}(-t)}. &
\end{flalign}
In addition, the update rule \eqref{eq:rho_update} taking into account the change of eigenvalues ${\omega_k \to -\omega_k}$, or equivalently ${\lambda_k \to -\lambda_k^*}$, rewrites as
\begin{align}\label{eq:rho_update_etak}
\tilde{\rho}_k^{(j+1)}(t) = 
\frac{(\lambda_{j+1}^* - \lambda_k^*)\tilde{\rho}_{k}^{(j)}(t)  +\frac{\lambda_{j+1} - \lambda_{j+1}^*}{1+|\tilde{\rho}_{j+1}^{(j)}(t)|^2}(\tilde{\rho}_{k}^{(j)}(t)-\tilde{\rho}_{j+1}^{(j)}(t))}
{\lambda_{j+1} - \lambda_k^*-\frac{\lambda_{j+1} - \lambda_{j+1}^*}{1+|\tilde{\rho}_{j+1}^{(j)}(t)|^2}\left(1 + \tilde{\rho}_{j+1}^{*(j)}(t)\tilde{\rho}_k^{(j)}(t) \right)}.
\end{align} 

If the relation $\tilde{\rho}_k^{(j)}(t) \rho_k^{*(j)}(-t)=1$ holds for all ${k=j+1,\dots,N}$, it will hold for $j+1$ as well. That means $\tilde{\rho}_k^{(j+1)}(t) \rho_k^{*(j+1)}(-t)=1$ for all $k>j+1$. Let us verify, that the update rule \eqref{eq:rho_update_etak} indeed preserves this property. Given the property \eqref{eq:rho_init_etak_inv} for all ${k=j+1,\dots,N}$ at an arbitrary $j=0,\dots,N-1$, i.e. $\tilde{\rho}_k^{(j)}(t)=\frac{1}{\rho_k^{*(j)}(-t)}$, the update rule \eqref{eq:rho_update_etak} can be rewritten as
\begin{align}\label{eq:prop5_induction1}
 \tilde{\rho}_k^{(j+1)}(t) =
 \frac{(\lambda_{j+1}^* - \lambda_k^*)\frac{1}{\rho_{k}^{*(j)}(-t)}  +\frac{\lambda_{j+1} - \lambda_{j+1}^*}{1+\frac{1}{|\rho_{j+1}^{(j)}(-t)|^2}}\left(\frac{1}{\rho_{k}^{*(j)}(-t)}-\frac{1}{\rho_{j+1}^{*(j)}(-t)}\right)}
{\lambda_{j+1} - \lambda_k^*-\frac{\lambda_{j+1} - \lambda_{j+1}^*}{1+\frac{1}{|\rho_{j+1}^{(j)}(-t)|^2}}\left(1 + \frac{1}{\rho_{j+1}^{(j)}(-t)\rho_k^{*(j)}(-t)} \right)}.
\end{align} 

To prove the induction step, let us assume the induction statement is correct. Then the update rule should preserve property \eqref{eq:rho_init_etak_inv}. That means
\begin{align}
 \tilde{\rho}_k^{(j+1)}(t) & = \frac{1}{\rho_k^{*(j+1)}(-t)} \label{eq:prop5}
\\ 
& = \frac{(\lambda_{k}^* - \lambda_{j+1})-\frac{\lambda_{j+1}^* - \lambda_{j+1}}{1+|\rho_{j+1}^{(j)}(-t)|^2} (1+\rho_{j+1}^{(j)}(-t) \rho_k^{*(j)}(-t))}{(\lambda_{k}^* - \lambda_{j+1}^*)\rho_k^{*(j)}(-t) + \frac{\lambda_{j+1}^* - \lambda_{j+1}}{1+|\rho_{j+1}^{(j)}(-t)|^2} \left(\rho_{k}^{*(j)}(-t) - \rho_{j+1}^{*(j)}(-t) \right)}
\label{eq:prop5_induction2}
\end{align}

One can now verify, that \eqref{eq:prop5_induction1} and \eqref{eq:prop5_induction2} are identical. The time domain signal according to \eqref{eq:sig_rho} for the transformation  \eqref{eq:rho_init_etak_inv} is given by
\begin{equation}\label{eq:signal_prop5}
\tilde{q}^{(N)}(t) = -4 \sum_{k=1}^{N} \sigma_k \frac{\tilde{\rho}_k^{*(k-1)}(t)}{1+\left|\tilde{\rho}_k^{(k-1)}(t)\right|^2}.
\end{equation}

We have shown that property \eqref{eq:prop5} is valid for all $k>j+1$ and therefore as well for $\tilde{\rho}_k^{(k-1)}(t)$. Using this property in \eqref{eq:signal_prop5}, the transformed signal rewrites as
\begin{equation}\label{eq:rho_rho_tilde_symmetry}
\tilde{q}^{(N)}(t) = -4 \sum_{k=1}^{N} \sigma_k \frac{\rho_k^{*(k-1)}(-t)}{1+\left|\rho_k^{(k-1)}(-t)\right|^2}.
\end{equation}

Comparing \eqref{eq:rho_rho_tilde_symmetry} with the original signal \eqref{eq:sig_rho}, one finds
\begin{equation}\label{eq:q_symmetry}
\tilde{q}^{(N)}(t)= q^{(N)}(-t)
\end{equation}

Now we prove the special case (v*) when $\eta_k=1$ and $\omega_k=0$. From \eqref{eq:rho_init_etak_inv}, we have $\tilde{\rho}_k^{(0)}(t)=\rho_k^{(0)}(t)$ from which we can immediately conclude $\tilde{q}^{(N)}(t)=q^{(N)}(t)$.
But from \eqref{eq:q_symmetry}, we know $\tilde{q}^{(N)}(t)= q^{(N)}(-t)$. Thus, one can conclude $q^{(N)}(t)=q^{(N)}(-t)$.

\section{Proof of approximations Thm.~\ref{th:T_imag} and Thm.~\ref{th:T_real}}\label{sec:proof_Th12}

To approximate the soliton pulse duration, we derive analytical expressions for the pulse in its tails $t \to \pm \infty$. W.l.o.g. we assume the eigenvalues $\lambda_k=\omega_k+j\sigma_k$ ($\sigma_k \in \mathbb{R}^+$, $\omega_k \in \mathbb{R}$) being order as follows: $\sigma_1\geq \sigma_2 \geq ... \geq \sigma_N$

\begin{lemma}\label{lem:Darboux_iteration_limit}
Consider the limits $t \to \pm \infty$.\\ 
a) \quad For arbitrary $\lambda_k=\omega_k+j\sigma_k$, the update rule \eqref{eq:rho_update} in the Darboux algorithm \ref{alg:DT2} simplifies to
\begin{align} 
\rho_k^{(j+1)}(t) & \to \rho_k^{(j)}(t)\frac{\lambda_k-\lambda_{j+1}^*}{\lambda_k-\lambda_{j+1}}-\frac{2j\sigma_{j+1}}{\lambda_k-\lambda_{j+1}}\rho_{j+1}^{(j)}(t), \qquad t \to \infty	\label{eq:rho_update_t_infty_general}
\\
\frac{1}{\rho_k^{(j+1)}(t)} & \to \frac{1}{\rho_k^{(j)}(t)}\frac{\lambda_k-\lambda_{j+1}^*}{\lambda_k-\lambda_{j+1}}-\frac{2j\sigma_{j+1}}{\lambda_k-\lambda_{j+1}}\frac{1}{\rho_{j+1}^{(j)}(t)}, \qquad t \to - \infty.
\label{eq:rho_update_t_-infty_general}
\end{align}
b) \quad Thus, each $\rho_k^{(j)}(t) \propto \exp(-2\sigma_k t)$ preserves its exponential order for all iterations 

\end{lemma}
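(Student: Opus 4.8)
The plan is to establish parts a) and b) \emph{simultaneously} by induction on the iteration index $j$, since the simplified update rules in a) already presuppose that the incoming $\rho_{j+1}^{(j)}(t)$ has the exponential order asserted in b); treating them jointly avoids circularity. The base case $j=0$ is immediate, as the explicit initialization \eqref{eq:rho_init_etak} gives $\rho_k^{(0)}(t)=\eta_k e^{j\varphi_k} e^{2j\omega_k t} e^{-2\sigma_k t}$, which is exactly $\propto\exp(-2\sigma_k t)$. The inductive hypothesis is that every $\rho_m^{(j)}(t)$ entering \eqref{eq:rho_update} behaves like $\exp(-2\sigma_m t)$ in both tails; in particular, since $\sigma_{j+1}>0$, this forces $\rho_{j+1}^{(j)}(t)\to 0$ as $t\to+\infty$ and $\rho_{j+1}^{(j)}(t)\to\infty$ as $t\to-\infty$. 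Throughout I use the ordering $\sigma_1\geq\dots\geq\sigma_N$ together with the fact that the inner loop of Alg.~\ref{alg:DT2} updates only indices $k>j+1$, so that $\sigma_k\leq\sigma_{j+1}$.

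For the $t\to+\infty$ tail I would first note that $|\rho_{j+1}^{(j)}(t)|^2\to 0$, so the recurring factor in \eqref{eq:rho_update} satisfies $\frac{\lambda_{j+1}-\lambda_{j+1}^*}{1+|\rho_{j+1}^{(j)}|^2}\to\lambda_{j+1}-\lambda_{j+1}^*=2j\sigma_{j+1}$, while the quadratic cross term $\rho_{j+1}^{*(j)}\rho_k^{(j)}$ vanishes. Substituting these limits, the numerator of \eqref{eq:rho_update} collapses using $\lambda_{j+1}-2j\sigma_{j+1}=\lambda_{j+1}^*$ and the denominator using $\lambda_{j+1}^*+2j\sigma_{j+1}=\lambda_{j+1}$; dividing through then yields exactly \eqref{eq:rho_update_t_infty_general}. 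This direction is essentially bookkeeping.

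The $t\to-\infty$ tail is the main obstacle and needs more care, because here $|\rho_{j+1}^{(j)}|^2\to\infty$ drives the factor $c:=\frac{2j\sigma_{j+1}}{1+|\rho_{j+1}^{(j)}|^2}$ to $0$, yet \emph{not every} $c$-term is negligible. I would write $1+|\rho_{j+1}^{(j)}|^2\approx\rho_{j+1}^{(j)}\rho_{j+1}^{*(j)}$ and track each product separately: $c$, $c\,\rho_k^{(j)}$ and $c\,\rho_{j+1}^{(j)}$ all vanish (using $\sigma_k\leq\sigma_{j+1}$ to bound the exponential rates), whereas $c\,\rho_{j+1}^{*(j)}\rho_k^{(j)}$ \emph{survives}, because the factor $\rho_{j+1}^{*(j)}$ cancels against $|\rho_{j+1}^{(j)}|^2$ and leaves $2j\sigma_{j+1}\,\rho_k^{(j)}/\rho_{j+1}^{(j)}$. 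Identifying the dominant terms of the numerator and denominator of \eqref{eq:rho_update}, taking the reciprocal, and dividing through by $\rho_k^{(j)}$ then produces \eqref{eq:rho_update_t_-infty_general}. Isolating precisely this single surviving quadratic term, while discarding the others, is the step I expect to be most error-prone.

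Finally, part b) closes the induction. In the $t\to+\infty$ form \eqref{eq:rho_update_t_infty_general}, $\rho_k^{(j+1)}$ is a linear combination of $\rho_k^{(j)}\propto\exp(-2\sigma_k t)$ and $\rho_{j+1}^{(j)}\propto\exp(-2\sigma_{j+1}t)$; since $\sigma_k\leq\sigma_{j+1}$, the $\exp(-2\sigma_k t)$ term decays more slowly and therefore dominates, so $\rho_k^{(j+1)}\propto\exp(-2\sigma_k t)$. Symmetrically, in \eqref{eq:rho_update_t_-infty_general} the term $1/\rho_k^{(j)}\propto\exp(2\sigma_k t)$ dominates $1/\rho_{j+1}^{(j)}\propto\exp(2\sigma_{j+1}t)$ as $t\to-\infty$, again giving $\rho_k^{(j+1)}\propto\exp(-2\sigma_k t)$. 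Thus the exponential order $\sigma_k$ is preserved at step $j+1$, completing the inductive step and establishing both a) and b). When $\sigma_k=\sigma_{j+1}$, as in the real-axis constellation, the two contributions share the order $\exp(-2\sigma_k t)$ and the conclusion is unchanged.
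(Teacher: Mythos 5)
Your proposal is correct and follows essentially the same route as the paper's proof: a joint induction on $j$ with base case given by the initialization \eqref{eq:rho_init_etak}, dropping the quadratic products $\rho_k^{(j)}\rho_l^{(j)}\propto\exp(-2(\sigma_k+\sigma_l)t)$ as $t\to\infty$ and retaining the dominant (blowing-up) terms as $t\to-\infty$, then concluding part b) from the ordering $\sigma_k\leq\sigma_{j+1}$ for $k>j+1$. Your explicit tracking of the four products involving the factor $\frac{\lambda_{j+1}-\lambda_{j+1}^*}{1+|\rho_{j+1}^{(j)}|^2}$ in the $t\to-\infty$ tail, including the surviving term $2j\sigma_{j+1}\,\rho_k^{(j)}/\rho_{j+1}^{(j)}$, is a more detailed rendering of the step the paper dispatches with ``considering only the dominant terms,'' and it checks out.
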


\begin{proof} Assume for some $j=0,\dots,N-1$, we have 
\begin{equation}\label{eq:rho_prop}
\rho_k^{(j)}(t) \propto \exp(-2\sigma_k t) \qquad \forall k=j+1, \dots, N
\end{equation}
which is true for $j=0$, as the Darboux initialization \eqref{eq:Darboux_init} is
$\rho_k^{(0)}= \eta_k \exp(j\varphi_k)\cdot \exp(2j\omega_k t)\cdot \exp(-2\sigma_k t)$.
Then one can easily verify \eqref{eq:rho_update_t_infty_general} by neglecting all terms\\ ${\rho_k^{(j)}(t)\cdot \rho_l^{(j)}(t) \propto \exp(-2(\sigma_k+\sigma_l) t) \to 0}$ ${\forall k,l=j+1, \dots, N}$ in \eqref{eq:rho_update} for the limit $t\to + \infty$. Similarly, for $t \to - \infty$, considering only the dominant terms $\rho_k^{(j)}(t) \propto \exp(-2 \sigma_k t) \to \infty$ and ${\rho_k^{(j)}(t)\cdot \rho_l^{(j)}(t) \propto \exp(-2(\sigma_k+\sigma_l) t) \to \infty}$ $\forall k,j=j+1, \dots, N$ in \eqref{eq:rho_update} leads to \eqref{eq:rho_update_t_-infty_general}. Recalling the ordering $\sigma_1\geq \sigma_2 ... \geq \sigma_N$ and $k>j+1$, Lem.~\ref{lem:Darboux_iteration_limit} b) follows from  \eqref{eq:rho_update_t_infty_general} and \eqref{eq:rho_update_t_-infty_general}.
\end{proof}

The update equation \eqref{eq:rho_update_t_infty_general} is a linear combination of $\rho_k^{(j)}(t)$ and $\rho_{j+1}^{(j)}(t)$. Similarly, from update equation \eqref{eq:rho_update_t_-infty_general}, $\frac{1}{\rho_k^{(j+1)}(t)}$ is a linear combination of $\frac{1}{\rho_k^{(j)}(t)}$ and $\frac{1}{\rho_{j+1}^{(j)}(t)}$. Therefore, 
\begin{equation}\label{eq:a_rk_superpos}
\rho_k^{(k-1)}(t)=\sum_{r=1}^k a_{r,k} \rho_r^{(0)}(t)
\qquad \qquad
\frac{1}{\rho_k^{(k-1)}(t)}=\sum_{r=1}^k \tilde{a}_{r,k} \frac{1}{\rho_r^{(0)}(t)}.
\end{equation}
The values of $a_{r,k}$ and $\tilde{a}_{r,k}$ are finite and depend on the eigenvalues. $a_{r,k}$ and $\tilde{a}_{r,k}$ can recursively be computed using the update equations \eqref{eq:rho_update_t_infty_general} and \eqref{eq:rho_update_t_-infty_general}. Since the update coefficients in \eqref{eq:rho_update_t_infty_general} and \eqref{eq:rho_update_t_-infty_general} are the same, one can expect that $\tilde{a}_{r,k}=a_{r,k}$. Now, we express the tails of an $N$-soliton in terms of $a_{r,k}$. We denote the envelope of $|x(t)|$ or its spectrum $|X(f)|$ as $\widehat{|x(t)|}$ and $\widehat{|X(f)|}$.  

\begin{lemma}\label{lem:sig_tails_gen}
Consider a multi-soliton pulse with $N$ eigenvalues $\lambda_k=\omega_k+j\sigma_k$ ($\sigma_k \in \mathbb{R}^+$, $\omega_k \in \mathbb{R}$) and spectral amplitude scaling $\eta_k$. $a_{r,k}$ are the scalars according to \eqref{eq:a_rk_superpos}.

a) As $t \to \pm \infty$, the tails of the multi soliton pulse tend to
\begin{align}\label{eq:q_abs_gen_limit}
 |q^{(N)}(t)| \underset{t \to \pm \infty}{\to} 
\left|4 \sum_{r=1}^{N} \eta_r^{\pm 1} \exp(\mp 2\sigma_r t) \exp(\pm j\varphi_r \pm 2j\omega_r t) \sum_{k=r}^{N} \sigma_k a_{r,k}\right|.
\end{align}

b) The signal envelope in the tails $t \to \pm \infty$ is given as
\begin{align}\label{eq:q_abs_limit_superpos}
\widehat{\left|q^{(N)}(t)\right|} \underset{t \to \pm \infty}{\to} 4  \sum_{r=1}^{N} \eta_r^{\pm1} \exp(\mp 2\sigma_r t) \left|\sum_{k=r}^{N} \sigma_k a_{r,k}\right|	
\end{align}

\end{lemma}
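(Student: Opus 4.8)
The plan is to work directly from the closed form \eqref{eq:sig_rho}, $q^{(N)}(t) = -4\sum_{k=1}^N \sigma_k \rho_k^{*(k-1)}(t)/(1+|\rho_k^{(k-1)}(t)|^2)$, and to simplify the $k$-th summand in each tail using Lemma~\ref{lem:Darboux_iteration_limit}b, which guarantees $\rho_k^{(k-1)}(t)\propto\exp(-2\sigma_k t)$ with $\sigma_k>0$. For $t\to+\infty$ every $\rho_k^{(k-1)}(t)\to 0$, so each denominator tends to $1$ and $q^{(N)}(t)\to -4\sum_{k=1}^N\sigma_k\rho_k^{*(k-1)}(t)$. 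For $t\to-\infty$ every $|\rho_k^{(k-1)}(t)|\to\infty$, so $1+|\rho_k^{(k-1)}|^2\approx|\rho_k^{(k-1)}|^2$ and the summand collapses via $\rho_k^{*(k-1)}/|\rho_k^{(k-1)}|^2=1/\rho_k^{(k-1)}(t)$, giving $q^{(N)}(t)\to -4\sum_{k=1}^N\sigma_k/\rho_k^{(k-1)}(t)$.

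Next I would insert the superposition identities \eqref{eq:a_rk_superpos}. At $+\infty$, substituting $\rho_k^{(k-1)}=\sum_{r=1}^k a_{r,k}\rho_r^{(0)}$ and exchanging the order of the double sum (run $r$ from $1$ to $N$, then $k$ from $r$ to $N$) yields $q^{(N)}(t)\to -4\sum_{r=1}^N \rho_r^{*(0)}(t)\sum_{k=r}^N\sigma_k a_{r,k}^*$; plugging in $\rho_r^{*(0)}(t)=\eta_r e^{-j\varphi_r}e^{-2j\omega_r t}e^{-2\sigma_r t}$ and using $|z|=|z^*|$ to flip the overall conjugate produces exactly the upper-sign version of \eqref{eq:q_abs_gen_limit}. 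At $-\infty$, the reciprocal identity $1/\rho_k^{(k-1)}=\sum_{r=1}^k\tilde a_{r,k}/\rho_r^{(0)}$, together with the observation $\tilde a_{r,k}=a_{r,k}$ (the update coefficients in \eqref{eq:rho_update_t_infty_general} and \eqref{eq:rho_update_t_-infty_general} coincide) and $1/\rho_r^{(0)}(t)=\eta_r^{-1}e^{-j\varphi_r}e^{-2j\omega_r t}e^{+2\sigma_r t}$, gives the lower-sign version directly (no conjugation needed). This establishes part (a).

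For part (b) I would read off from \eqref{eq:q_abs_gen_limit} that the tail is a sum of $N$ complex contributions, the $r$-th having modulus $4\eta_r^{\pm1}e^{\mp2\sigma_r t}\bigl|\sum_{k=r}^N\sigma_k a_{r,k}\bigr|$ and phase $\pm\varphi_r\pm2\omega_r t+\arg(\sum_{k=r}^N\sigma_k a_{r,k})$. The triangle inequality immediately bounds $|q^{(N)}|$ by the claimed right-hand side of \eqref{eq:q_abs_limit_superpos}, and I would argue the bound is attained as an \emph{upper} envelope: because the eigenvalues are distinct, the frequencies $2\omega_r$ differ, so by equidistribution the relative phases $e^{\pm2j\omega_r t}$ sweep the full circle and the local maxima of $|q^{(N)}(t)|$ approach the sum of the individual moduli.

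The main obstacle is precisely this last identification of the envelope with the sum of magnitudes. The delicate case is purely imaginary eigenvalues, $\omega_r=0$, where the phase factors no longer oscillate in $t$ and the far tail is instead dominated by the single slowest-decaying component $r=N$; there the sum-of-magnitudes expression must be read as an upper bound that is made tight by the free modulation phases $\varphi_r$, over which the worst-case duration $T_{\max}$ is maximized anyway, so the $\varphi_r$ can be chosen to align all contributions. I would also want to verify $\tilde a_{r,k}=a_{r,k}$ rather than merely assert it, by checking that the two recursions generated by \eqref{eq:rho_update_t_infty_general} and \eqref{eq:rho_update_t_-infty_general} produce identical coefficients at every step.
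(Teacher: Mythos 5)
Your proof of part (a) follows the paper's argument essentially verbatim: the same reduction of \eqref{eq:sig_rho} via Lem.~\ref{lem:Darboux_iteration_limit}b (denominator $\to 1$ at $+\infty$, and $\rho^{*}/|\rho|^2 = 1/\rho$ at $-\infty$), the same substitution of \eqref{eq:a_rk_superpos} with the exchange of summation order, and the same use of $|z|=|z^*|$ to absorb the conjugate; your wish to verify $\tilde a_{r,k}=a_{r,k}$ rather than assert it is reasonable, though the paper too only argues it from the coincidence of the update coefficients in \eqref{eq:rho_update_t_infty_general} and \eqref{eq:rho_update_t_-infty_general}. The one place you deviate is part (b), and there your primary argument misreads what $\widehat{|q^{(N)}(t)|}$ means. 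The paper defines the envelope as the maximum of $|q^{(N)}(t)|$ over all modulation phase combinations $\{\varphi_k\}_{k=1}^N$ (consistent with the definition of $T_{\max}$ in \eqref{eq:TmaxBmax}), so attainment of the triangle-inequality bound follows in one line: at any fixed $t_0$ one chooses $\{\varphi_k\}$ so that all $N$ terms in \eqref{eq:q_abs_gen_limit} share a common phase, uniformly for distinct or zero $\omega_r$. Your equidistribution argument, which treats the envelope as a temporal envelope for \emph{fixed} $\varphi_r$, is both unnecessary and fragile: if the differences $\omega_r-\omega_s$ are rationally dependent, the relative phases $2(\omega_r-\omega_s)t$ live on a closed subtorus and need never align with the fixed offsets $\varphi_r-\varphi_s+\arg\bigl(\sum_{k=r}^{N}\sigma_k a_{r,k}\bigr)-\arg\bigl(\sum_{k=s}^{N}\sigma_k a_{s,k}\bigr)$, so the local maxima need not approach the sum of moduli. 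You do supply the correct phase-alignment argument as a fallback for the purely imaginary case, but it is not a special-case patch --- it is the paper's argument, and it should be your main (and only) argument for part (b).
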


\begin{proof} Consider the signal update rule of the Darboux transformation \eqref{eq:sig_update} in the limits $t \to \pm \infty$. Using Lem. \ref{lem:Darboux_iteration_limit} b), the signal update rule simplifies to
\begin{align}
q^{(N)}(t) & \to -4 \sum_{k=1}^{N} \sigma_k \rho_k^{*(k-1)}(t), \quad t \to \infty \qquad \mathrm{where} \quad \rho_k^{(k-1)}(t) \to 0 \label{eq:q_limit+}	
\\
q^{(N)}(t) & \to -4 \sum_{k=1}^{N} \sigma_k \frac{1}{\rho_k^{(k-1)}(t)}, \quad t \to -\infty,  \qquad \mathrm{where} \quad \rho_k^{(k-1)}(t)  \to \infty.\label{eq:q_limit-}
\end{align}
Using \eqref{eq:a_rk_superpos} in the above equations, we get
\begin{align}\label{eq:q_limit_superpos+}
 q^{(N)}(t) &\underset{t \to \infty}{\to} -4 \sum_{k=1}^{N} \sigma_k \rho_k^{*(k-1)}(t)	
 = -4 \sum_{k=1}^{N} \sigma_k \sum_{r=1}^k a^*_{r,k} \rho_r^{*\, (0)}(t)	
= -4 \sum_{r=1}^{N} \rho_r^{*\, (0)}(t) \sum_{k=r}^N \sigma_k a^*_{r,k} 
\\
 q^{(N)}(t) &\underset{t \to -\infty}{\to} -4 \sum_{k=1}^{N} \sigma_k \frac{1}{\rho_k^{(k-1)}(t)}	
 = -4 \sum_{k=1}^{N} \sigma_k \sum_{r=1}^{k} a_{r,k} \frac{1}{\rho_r^{(0)}(t)}	
= -4 \sum_{r=1}^{N} \frac{1}{\rho_r^{(0)}(t)} \sum_{k=r}^{N} \sigma_k a_{r,k}
\label{eq:q_limit_superpos-}
\end{align}
Note further, that the index $k$ ranges only from $r$ to $N$, since all $a_{r,k}$ with $k<r$ are zero (see \eqref{eq:rho_update_t_infty_general}, \eqref{eq:rho_update_t_-infty_general}). Then, replacing $\rho_r^{(0)}(t)$ with the initialization \eqref{eq:rho_init_etak} leads to
\begin{align}
& q^{(N)}(t) \underset{t \to \infty}{\to}	  -4 \sum_{r=1}^{N} \eta_r \exp(- 2\sigma_r t) \exp(-j\varphi_r-2j\omega_r t) \sum_{k=r}^{N} \sigma_k a_{r,k}^*	\label{eq:q_+limit_superpos2}	
\\
& q^{(N)}(t) \underset{t \to -\infty}{\to} 	 -4 \sum_{r=1}^{N}  \frac{1}{\eta_r} \exp(2\sigma_r t) \exp(-j\varphi_r-2j\omega_r t) \sum_{k=r}^{N} \sigma_k a_{r,k}	\label{eq:q_-limit_superpos2}
\end{align}

It follows from \eqref{eq:q_+limit_superpos2} and \eqref{eq:q_-limit_superpos2} that
\begin{align}
|q^{(N)}(t)| \underset{t \to \pm \infty}{\to} &\left|4 \sum_{r=1}^{N} \eta_r^{\pm 1} \exp(\mp 2\sigma_r t) \exp(\pm j\varphi_r \pm 2j\omega_r t) \sum_{k=r}^{N} \sigma_k a_{r,k}\right|	\label{eq:q_abs_lim}
\\
\leq &4 \sum_{r=1}^{N} \eta_r^{\pm 1} \exp(\mp 2\sigma_r t) \left| \sum_{k=r}^{N} \sigma_k a_{r,k}\right|	\label{eq:q_env_lim}
\end{align}

We consider the envelope of the signal which is the maximum absolute value of the signal for all possible phase combinations $\{\varphi_k\}_{k=1}^N$. For each given time instance $t_0$, there exists one combination of $\{\varphi_k\}$ such that all terms $\exp(-j\varphi_r-2j\omega_r t) \sum_{k=r}^{N} a_{r,k}$ in \eqref{eq:q_abs_lim} have identical phase. Therefore we consider the case $\arg\left\{e^{j\varphi_r+2j\omega_r t_0}\sum_{k=r}^{N} a_{r,k}\right\}
=\arg\left\{e^{j\varphi_s+2j\omega_s t_0}\sum_{k=s}^{N} a_{s,k}\right\}$
for all $r,s=1,...,N$. The signal envelope among all phase combinations $\{\varphi_k\}$ for $t \to \pm \infty$ is \eqref{eq:q_env_lim}.

\end{proof}

\begin{lemma}\label{lem:sig_tails_imag}
Consider an $N-$soliton with imaginary eigenvalues $\lambda_k=j\sigma_k$, $\sigma_k \in \mathbb{R}^+$ and spectral amplitude scaling $\eta_k$. The multi soliton pulse in the tails $t \to \pm \infty$ is given as
\begin{align}
\left|q^{(N)}(t)\right| \to 4  \eta_N^{\pm 1} \exp(\mp 2\sigma_N t) \sigma_N \prod_{k=1}^{N-1}\left|\frac{\sigma_N+\sigma_k}{\sigma_N-\sigma_k}\right| 	
\end{align}

\end{lemma}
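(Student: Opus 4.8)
The plan is to obtain the claim as a direct specialization of the general tail formula in Lemma~\ref{lem:sig_tails_gen}(a), followed by isolating the single surviving exponential. Setting $\omega_k=0$ in \eqref{eq:q_abs_gen_limit} gives
\[
|q^{(N)}(t)| \underset{t\to\pm\infty}{\to}\left|4\sum_{r=1}^N \eta_r^{\pm 1}\exp(\mp 2\sigma_r t)\exp(\pm j\varphi_r)\sum_{k=r}^N \sigma_k a_{r,k}\right|.
\]
Using the ordering $\sigma_1\geq\dots\geq\sigma_N$, as $t\to +\infty$ the factor $\exp(-2\sigma_r t)$ decays most slowly for the smallest $\sigma_r$, namely $r=N$, and symmetrically as $t\to-\infty$ the factor $\exp(2\sigma_r t)$ decays most slowly for $r=N$. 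Hence a single term $r=N$ dominates the sum; since it is the only surviving contribution its phase $\varphi_N$ drops out of the magnitude, and the inner sum $\sum_{k=N}^N \sigma_k a_{N,k}$ collapses to $\sigma_N a_{N,N}$. The tail therefore reduces to $4\,\eta_N^{\pm1}\exp(\mp 2\sigma_N t)\,|\sigma_N a_{N,N}|$.

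It remains to evaluate the self-coefficient $a_{N,N}$ appearing in the expansion $\rho_N^{(N-1)}(t)=\sum_{r=1}^N a_{r,N}\rho_r^{(0)}(t)$ of \eqref{eq:a_rk_superpos}. I would track it through the limiting recursion \eqref{eq:rho_update_t_infty_general}: at each step $j\to j+1$ (with $k=N$) the update writes $\rho_N^{(j+1)}$ as a linear combination of $\rho_N^{(j)}$ and $\rho_{j+1}^{(j)}$, and only the first feeds the coefficient of $\rho_N^{(0)}$, because $\rho_{j+1}^{(j)}$ with $j+1\leq N-1$ is by \eqref{eq:a_rk_superpos} a combination of $\rho_1^{(0)},\dots,\rho_{j+1}^{(0)}$ and never contains $\rho_N^{(0)}$. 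Multiplying the diagonal prefactors across all steps yields
\[
a_{N,N}=\prod_{j=0}^{N-2}\frac{\lambda_N-\lambda_{j+1}^*}{\lambda_N-\lambda_{j+1}}=\prod_{m=1}^{N-1}\frac{\lambda_N-\lambda_m^*}{\lambda_N-\lambda_m}.
\]

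Finally I would substitute $\lambda_k=j\sigma_k$, so that $\lambda_m^*=-j\sigma_m$ and each factor becomes $\frac{j\sigma_N+j\sigma_m}{j\sigma_N-j\sigma_m}=\frac{\sigma_N+\sigma_m}{\sigma_N-\sigma_m}$, which is real. Taking absolute values gives $|a_{N,N}|=\prod_{m=1}^{N-1}\left|\frac{\sigma_N+\sigma_m}{\sigma_N-\sigma_m}\right|$, and combining with the dominant-term expression above produces exactly the claimed tail $4\,\eta_N^{\pm1}\exp(\mp 2\sigma_N t)\,\sigma_N\prod_{k=1}^{N-1}\left|\frac{\sigma_N+\sigma_k}{\sigma_N-\sigma_k}\right|$. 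I expect the only delicate point to be the bookkeeping in the previous paragraph, namely verifying that the off-diagonal term $\rho_{j+1}^{(j)}$ never feeds back into the $\rho_N^{(0)}$ coefficient; everything else is a routine specialization. One should also record the tacit assumption that $\sigma_N<\sigma_k$ for $k<N$ so that the denominators $\sigma_N-\sigma_m$ do not vanish, which is guaranteed by the distinctness of the eigenvalues.
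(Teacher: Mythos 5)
Your proposal is correct and follows essentially the same route as the paper's proof: both specialize the general tail formula \eqref{eq:q_abs_gen_limit} of Lemma~\ref{lem:sig_tails_gen} to $\omega_k=0$, keep only the slowest-decaying term $r=N$, and obtain $a_{N,N}$ from the limiting recursion \eqref{eq:rho_update_t_infty_general}. In fact you supply a detail the paper leaves implicit --- the bookkeeping argument that the off-diagonal term $\rho_{j+1}^{(j)}$ never feeds the $\rho_N^{(0)}$ coefficient, so $a_{N,N}$ is just the product of diagonal prefactors $\prod_{m=1}^{N-1}\frac{\lambda_N-\lambda_m^*}{\lambda_N-\lambda_m}$ --- and your remark on distinct $\sigma_k$ matches the paper's tacit ``well separated eigenvalues'' assumption.
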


\begin{proof} Consider the general signal limits \eqref{eq:q_abs_gen_limit} in Lem. \ref{lem:sig_tails_gen} and limit the eigenvalues to be imaginary $\lambda_k=j\sigma_k$. Under the (practically relevant) assumption of well separated eigenvalues, all terms but the leading exponential term given by the smallest eigenvalue $\sigma_N=\underset{k}{\min}\, \sigma_k$ can be neglected in the limit $t \to \pm \infty$. Consequently, the absolute value of the pulse in \eqref{eq:q_abs_limit_superpos} in its tails simplifies to \eqref{eq:q_abs_limit_superpos_imag} where $a_{N,N}$ can be derived from \eqref{eq:rho_update_t_infty_general}. If $\Delta t_k=\Delta t$ (symmetric pulses), $|q^{(N)}(t)|$ converges to its limit fast; convergence takes longer when $|\Delta t_N-\Delta t_l|$ ($l\neq N$) is large.
\begin{align}\label{eq:q_abs_limit_superpos_imag}
\left|q^{(N)}(t)\right| \underset{t \to \pm \infty}{\to} 4  \eta_N^{\pm 1} \exp(\mp 2\sigma_N t) \sigma_N \left|a_{N,N}\right| \qquad \mathrm{with} \quad a_{N,N} = \prod_{k=1}^{N-1}\frac{\sigma_N+\sigma_k}{\sigma_N-\sigma_k}
\end{align}
\end{proof}

\begin{lemma}\label{lem:sig_tails_real}
Consider a multi-soliton pulse with $N$ eigenvalues parallel to the real axis ${\lambda_k=\omega_k+j\sigma}$ ($\sigma \in \mathbb{R}^+$, $\omega_k \in \mathbb{R}$) and spectral amplitude scaling $\eta_k$. We denote $A_r=\left|\sum_{k=r}^{N}  a_{r,k}\right|$. The soliton pulse in the tails $t \to \pm \infty$ is bounded by \eqref{eq:signal_bound_real} and the signal envelope converges to \eqref{eq:signal_envelope_real}.
\begin{align}
\left|q^{(N)}(t)\right| & \leq 4  \sigma \exp\left(\mp 2\sigma \left(t \mp \frac{1}{2 \sigma}\ln\left(\sum_{r=1}^{N} \eta_r^{\pm 1}  A_r \right) \right) \right) \label{eq:signal_bound_real}	
\\
\widehat{\left|q^{(N)}(t)\right|} & \to 4  \sigma \exp\left(\mp 2\sigma \left(t \mp \frac{1}{2 \sigma}\ln\left(\sum_{r=1}^{N} \eta_r^{\pm 1}  A_r \right) \right) \right) \label{eq:signal_envelope_real}
\end{align}

\end{lemma}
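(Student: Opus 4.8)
The plan is to obtain Lemma~\ref{lem:sig_tails_real} as a direct specialization of Lemma~\ref{lem:sig_tails_gen} to the case of a common imaginary part, $\sigma_k=\sigma$ for every $k$. The crucial structural simplification is that with a single $\sigma$, all the exponential factors $\exp(\mp 2\sigma_r t)$ that appear in the general tail expressions \eqref{eq:q_abs_gen_limit} and \eqref{eq:q_abs_limit_superpos} collapse to one common factor $\exp(\mp 2\sigma t)$. The tail is therefore governed by a single decay rate, rather than being dominated by the term of smallest eigenvalue as in the purely imaginary case of Lemma~\ref{lem:sig_tails_imag}; this is precisely why a clean superposition over all $r$ survives in the limit here.

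First I would take the triangle-inequality bound \eqref{eq:q_env_lim} and the envelope \eqref{eq:q_abs_limit_superpos} from Lemma~\ref{lem:sig_tails_gen} and substitute $\sigma_k=\sigma$. Because $\sigma$ then factors out of the inner sum, $\sum_{k=r}^N \sigma_k a_{r,k}=\sigma\sum_{k=r}^N a_{r,k}$, so that $\left|\sum_{k=r}^N \sigma_k a_{r,k}\right|=\sigma A_r$ with $A_r=\left|\sum_{k=r}^N a_{r,k}\right|$. Pulling the now-common factors $4\sigma\exp(\mp 2\sigma t)$ in front gives
\[
\left|q^{(N)}(t)\right| \;\le\; 4\sigma\exp(\mp 2\sigma t)\sum_{r=1}^N \eta_r^{\pm 1} A_r ,
\]
and the identical expression for the envelope $\widehat{\left|q^{(N)}(t)\right|}$, the latter being attained through the same phase-alignment argument already used to pass from \eqref{eq:q_abs_lim} to \eqref{eq:q_env_lim} in the proof of Lemma~\ref{lem:sig_tails_gen}.

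The remaining step is purely algebraic: I would absorb the amplitude prefactor into a spatial shift of the exponential. Writing $S_\pm=\sum_{r=1}^N \eta_r^{\pm 1} A_r$, one checks $\exp(\mp 2\sigma t)\,S_\pm=\exp(\mp 2\sigma t+\ln S_\pm)=\exp\!\left(\mp 2\sigma\bigl(t\mp \tfrac{1}{2\sigma}\ln S_\pm\bigr)\right)$, which reproduces exactly the forms \eqref{eq:signal_bound_real} and \eqref{eq:signal_envelope_real}.

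The single point demanding care --- and the main obstacle, modest as it is --- is the bookkeeping of the coupled $\pm/\mp$ signs. One must verify that the product $\mp 2\sigma\cdot\bigl(\mp\tfrac{1}{2\sigma}\ln S_\pm\bigr)$ evaluates to $+\ln S_\pm$ for the upper-sign branch ($t\to+\infty$, decay $\exp(-2\sigma t)$, $S_+=\sum_r \eta_r A_r$) and for the lower-sign branch ($t\to-\infty$, growth $\exp(+2\sigma t)$, $S_-=\sum_r \eta_r^{-1} A_r$) simultaneously, so that the single compact formula is valid in both tails. Conceptually, the lemma states that equal imaginary parts force each tail to be a pure shifted copy of a fundamental-soliton tail, with the shift $\tfrac{1}{2\sigma}\ln S_\pm$ encoding the accumulated effect of the scalings $\eta_r$ and of the Darboux superposition coefficients $a_{r,k}$.
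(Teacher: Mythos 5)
Your proof is correct and follows essentially the same route as the paper: specializing Lemma~\ref{lem:sig_tails_gen} to equal imaginary parts so that $\sigma$ factors out of the inner sums, yielding $\sigma A_r$, and then invoking the same phase-alignment argument for the envelope. Your explicit verification that the prefactor $\sum_{r=1}^N \eta_r^{\pm 1} A_r$ absorbs into the temporal shift $\mp\frac{1}{2\sigma}\ln\left(\sum_{r=1}^N \eta_r^{\pm 1} A_r\right)$ is a small step the paper leaves implicit, but it is routine algebra and does not constitute a different method.
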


\begin{proof} Consider the general soliton signal \eqref{eq:q_abs_gen_limit} and its envelope \eqref{eq:q_abs_limit_superpos} for $t \to \pm \infty$ in Lem.~ \ref{lem:sig_tails_gen}. By limiting all eigenvalues to have identical imaginary part $\lambda_k=\omega_k+j\sigma$, we can conclude
\begin{align*}
\left|q^{(N)}(t)\right| \leq 4  \sigma \exp(\mp 2\sigma t) \sum_{r=1}^{N} \eta_r^{\pm 1}  \left|\sum_{k=r}^{N}  a_{r,k}\right| 
\qquad \widehat{\left|q^{(N)}(t)\right|} \to 4  \sigma \exp(\mp 2\sigma t) \sum_{r=1}^{N} \eta_r^{\pm 1}  \left|\sum_{k=r}^{N}  a_{r,k}\right|.
\end{align*}

\end{proof}

\begin{proof}[Proof of Thm. \ref{th:T_imag}]
Consider a multi-soliton pulse with imaginary eigenvalues $\lambda_k=j\sigma_k$ in the limits $t \to \pm \infty$ according to Lem. \ref{lem:sig_tails_imag}. Since the signal tails have the same slope for $t \to -\infty$ and $t \to \infty$, the smallest pulse duration $T=T_+ -T_-$ according to Def. \ref{def:TB_def_energy} is attained when the pulse is truncated such that the same fraction $\frac{\varepsilon}{2} E_\mathrm{tot}$ of the total soliton energy $E_\mathrm{tot}=4\sum_{k=1}^N \sigma_k$ is lost in both of the respective tails. Then, the pulse duration can be calculated from  $\frac{\varepsilon}{2} E_\mathrm{tot} =\int_{-\infty}^{T_-} \left|q^{(N)}(t)\right|^2 \partial t = \int_{T_+}^{\infty} \left|q^{(N)}(t)\right|^2 \partial t$. Using the signal tails in Lem. \ref{lem:sig_tails_imag} we get
${\frac{\varepsilon}{2} E_\mathrm{tot} \approxeq \int_{-\infty}^{T_-}16 \frac{\sigma_N^2}{\eta_N^2} \exp\left(4\sigma_N t \right) |a_{N,N}|^2 \partial t
 \approxeq \int_{T_+}^{\infty} 16 \sigma_N^2 \eta_N^2 \exp\left(-4\sigma t \right) |a_{N,N}|^2 \partial t}$.
Solving the equation for $T_+$ and $T_-$ leads to the pulse duration in Th.~\ref{th:T_imag}. Note that the accuracy of this approximation decreases if $\Delta t_k$ are very different and solitonic components become separated (unless for very small $\varepsilon \to 0$), since \eqref{eq:q_abs_limit_superpos_imag} in Lem. \ref{lem:sig_tails_imag} becomes less precise for smaller $|t|$.

\end{proof}

\begin{proof}[Proof of Thm. \ref{th:T_real}]
Consider a multi-soliton with eigenvalues parallel to the real axis $\lambda_k=\omega_k+j\sigma$ in its tails $t \to \pm \infty$ according to Lem. \ref{lem:sig_tails_real}. Since the signal tails have the same slope for $t \to -\infty$ and $t \to \infty$, the smallest pulse duration $T=T_+ -T_-$ according to Def. \ref{def:TB_def_energy} is attained when the pulse is truncated such that the same fraction $\frac{\varepsilon}{2} E_\mathrm{tot}$ of the total soliton energy $E_\mathrm{tot}=4\sum_{k=1}^N \sigma_k=4N\sigma$ is lost in both of the respective tails. Then, the pulse duration can be calculated from \eqref{eq:outage_energy_EVreal} where we use the signal envelope in Lem. \ref{lem:sig_tails_real} to approximate the integral:
\begin{align}\label{eq:outage_energy_EVreal}
\frac{\varepsilon}{2} E_\mathrm{tot} =\int_{-\infty}^{T_-} \left|q^{(N)}(t)\right|^2 \partial t = \int_{T_+}^{\infty} \left|q^{(N)}(t)\right|^2 \partial t
\approxeq 4 \sigma \exp\left(\pm 4\sigma \left(T_\mp \pm \frac{1}{2 \sigma} \ln\left(\sum_{r=1}^N \eta_r^{\mp 1} A_r \right)\right)\right).
\end{align}
Solving \eqref{eq:outage_energy_EVreal} for $T_+$ and $T_-$ leads to the pulse duration in Th.~\ref{th:T_real}. Showing the approximation of $T_\mathrm{max}(\varepsilon)$ in  \eqref{eq:T_real} being minimized for $\eta_r=\eta$, needs the proof of $\left. T_\mathrm{max}(\varepsilon)\right|_{\eta_r} \geq \left. T_\mathrm{max}(\varepsilon)\right|_{\eta_r=\eta}$.
Using \eqref{eq:T_real} 
this rewrites as
${\sum_{r=1}^{N} \eta_r A_r  \cdot \sum_{r=1}^{N} \frac{1}{\eta_r} A_r \geq \left(\sum_{r=1}^{N} A_r \right)^2}$. Since $\eta_r, A_r \in \mathbb{R}$, we have
${\sum_{r=1}^{N} \left|\sqrt{\eta_r A_r}\right|^2  \cdot \sum_{r=1}^{N} \left|\sqrt{\frac{1}{\eta_r} A_r}\right|^2  	
\geq \left|\sum_{r=1}^{N} A_r \right|^2}$
which is the Cauchy-Schwarz inequality.

\end{proof}

\section{Proof of approximations Thm.~\ref{th:B_imag} and Thm.~\ref{th:B_real}}\label{sec:proof_Th34}

\begin{lemma}\label{lem:sep_sol_spectrum}
Consider an $N-$soliton with $\lambda_k=\omega_k+\sigma_k$, temporally separated into its first order soliton components. Its (linear) spectrum becomes \eqref{eq:soliton_spectrum} with the corresponding envelope \eqref{eq:soliton_spectrum_envelope}.
\begin{align}
Q(f) & =-\pi \sum_{k=1}^N \exp\left(j (-\varphi_k-2\omega_k \Delta t_k-2\pi f \Delta t_k) \right) \mathrm{sech}\left(\frac{\pi^2}{2\sigma_k}\left(f+\frac{\omega_k}{\pi}\right)\right) \label{eq:soliton_spectrum}
\\
\widehat{|Q(f)|} & =\pi \sum_{k=1}^N \mathrm{sech}\left(\frac{\pi^2}{2\sigma_k}\left(f+\frac{\omega_k}{\pi}\right)\right)\label{eq:soliton_spectrum_envelope}
\end{align} 

\end{lemma}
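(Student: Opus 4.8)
The plan is to exploit the fact, noted immediately before the lemma, that in the separation limit $\eta_k/\eta_l \to \infty$ the $N$-soliton degenerates into a \emph{linear} superposition of $N$ non-interacting fundamental solitons, one per eigenvalue $\lambda_k=\omega_k+j\sigma_k$. Writing $q^{(N)}(t) \to \sum_{k=1}^N q_k(t)$ and invoking linearity of the Fourier transform, I would reduce the whole statement to transforming a single fundamental soliton $q_k(t)$ and then summing over $k$.

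First I would obtain the closed form of one fundamental soliton directly from the Darboux recursion at $N=1$. Inserting the initialization $\rho_1^{(0)}(t)=\eta_k\exp(j\varphi_k)\exp(2j\omega_k t)\exp(-2\sigma_k t)$ into the single update \eqref{eq:sig_update}, using $\lambda_k-\lambda_k^*=2j\sigma_k$ and $\eta_k=\exp(2\sigma_k\Delta t_k)$ (see \eqref{eq:DeltaT_eta}), the quotient $\rho_1^{*(0)}/(1+|\rho_1^{(0)}|^2)$ collapses to one half of a hyperbolic secant, giving
\[
q_k(t)=-2\sigma_k\exp(-j\varphi_k)\exp(-2j\omega_k t)\,\mathrm{sech}\big(2\sigma_k(t-\Delta t_k)\big),
\]
i.e.\ a secant of width $1/(2\sigma_k)$ centered at $\Delta t_k$, carrying a constant phase $-\varphi_k$ and a linear ramp $-2\omega_k t$. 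Next I would Fourier transform $q_k$. Starting from the standard pair $\mathrm{sech}(2\sigma_k t)\leftrightarrow \frac{\pi}{2\sigma_k}\mathrm{sech}\!\left(\frac{\pi^2 f}{2\sigma_k}\right)$, the time shift by $\Delta t_k$ contributes $\exp(-2\pi j f\Delta t_k)$, while the modulation $\exp(-2j\omega_k t)=\exp(-2\pi j(\omega_k/\pi)t)$ shifts the spectral argument to $f+\omega_k/\pi$ and converts the shift factor into $\exp(-2\pi j(f+\omega_k/\pi)\Delta t_k)=\exp(-2\pi j f\Delta t_k)\exp(-2j\omega_k\Delta t_k)$. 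The prefactor $-2\sigma_k\exp(-j\varphi_k)$ cancels the $2\sigma_k$ in the transform amplitude, leaving exactly $-\pi\exp\!\big(j(-\varphi_k-2\omega_k\Delta t_k-2\pi f\Delta t_k)\big)\,\mathrm{sech}\!\left(\frac{\pi^2}{2\sigma_k}(f+\omega_k/\pi)\right)$; summing over $k$ yields \eqref{eq:soliton_spectrum}.

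Finally, for the envelope \eqref{eq:soliton_spectrum_envelope} I would argue exactly as in Lem.~\ref{lem:sig_tails_gen}. The envelope is the maximum of $|Q(f)|$ over all phase combinations $\{\varphi_k\}$, and $\varphi_k$ enters only through the unit-modulus factor $\exp(-j\varphi_k)$ of the $k$-th summand. Hence for each fixed $f$ one may choose the $\{\varphi_k\}$ so that all $N$ terms share a common argument; the triangle inequality is then tight and $|Q(f)|$ attains the sum of the individual magnitudes. Since each $\mathrm{sech}(\cdot)$ of a real argument is nonnegative, this gives $\pi\sum_{k=1}^N \mathrm{sech}\!\left(\frac{\pi^2}{2\sigma_k}(f+\omega_k/\pi)\right)$.

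The main obstacle I expect is the justification that the separated pulse is a genuinely clean linear superposition with no residual cross term, i.e.\ that the nonlinear Darboux interaction really vanishes as $\eta_k/\eta_l\to\infty$, so that the transform may be applied termwise. Beyond that, the only delicate point is keeping the Fourier bookkeeping consistent, namely the relative sign of the time shift and the modulation, the $f\mapsto f+\omega_k/\pi$ convention, and the $2\pi$ factors, so that the combined phase $-2\omega_k\Delta t_k$ emerges correctly. The secant algebra and the phase-alignment argument are otherwise routine.
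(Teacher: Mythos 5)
Your proposal is correct and follows essentially the same route as the paper's proof: treat the separated pulse as a linear superposition of fundamental solitons, transform each component, and obtain the envelope by choosing phases $\{\varphi_k\}$ that align all summands so the triangle inequality is tight. The only difference is that you derive the single-soliton spectrum \eqref{eq:one_soliton_spectrum} explicitly from the $N=1$ Darboux step and standard $\mathrm{sech}$ Fourier pairs (your bookkeeping of the shift, modulation, and the combined phase $-2\omega_k\Delta t_k$ checks out), whereas the paper simply cites it as well known; the clean-superposition caveat you raise is not a gap here, since the lemma takes temporal separation as its hypothesis, exactly as the paper does.
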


\begin{proof} The spectrum of fundamental soliton solution with eigenvalues $\lambda_k=j\sigma_k+\omega_k$, spectral amplitude scaling $\eta_k=\exp(2\sigma_k \Delta t_k)$ and spectral phase $\varphi_k$ is well known to be \eqref{eq:one_soliton_spectrum}.
\begin{equation}\label{eq:one_soliton_spectrum}
Q(f) =-\pi \exp\left(j (-\varphi_k-2\omega_k \Delta t_k-2\pi f \Delta t_k)\right) \mathrm{sech}\left(\frac{\pi^2}{2\sigma_k}\left(f+\frac{\omega_k}{\pi}\right)\right).
\end{equation}
Assuming a multi-soliton being separated into its first order components, it can be represented as a linear superposition of first order solitons \eqref{eq:one_soliton_spectrum} leading to \eqref{eq:soliton_spectrum}.
We can conclude from \eqref{eq:soliton_spectrum} that $|Q(f)| \leq \pi \sum_{k=1}^N \mathrm{sech}\left(\frac{\pi^2}{2\sigma_k}\left(f+\frac{\omega_k}{\pi}\right)\right)$.
Furthermore, for each frequency point $f_0$, there exists one combination of $\{\varphi_k\}$ such that all terms ${\exp(-j\varphi_k-2j\omega_k \Delta t_k-2j\pi f \Delta t_k)}$ in \eqref{eq:soliton_spectrum} have identical phase. Thus, the envelope of the spectrum among all phase combinations $\{\varphi_k\}$ is \eqref{eq:soliton_spectrum_envelope}.

\end{proof}

\begin{proof}[Proof of Thm. \ref{th:B_imag}]
Consider the signal spectrum \eqref{eq:soliton_spectrum} in Lem.~\ref{lem:sep_sol_spectrum} for imaginary eigenvalues $\lambda_k=j\sigma_k$. In the limits $f\to \pm \infty$, $\mathrm{sech}(x)$ decays exponentially. Thus all but the leading exponential term, determined by $\sigma_1=\underset{k}{\max} \, \sigma_k$, can be neglected in \eqref{eq:soliton_spectrum_envelope}. Thus we get
\begin{equation}\label{eq:Q_imag_limit}
|Q(f)| \to 2\pi \exp \left(\mp \frac{\pi^2 f}{2 \sigma_1} \right), \quad f \to \pm \infty
\end{equation}
Using \eqref{eq:Q_imag_limit} and noting its symmetry, the bandwidth $B$ according to Def.~\ref{def:TB_def_energy} can be estimated from\\
${\frac{\varepsilon}{2}E_\mathrm{tot} = \int_{B/2}^\infty |Q(f)|^2 \partial f 
\approxeq 
4 \sigma_1 \exp(-\frac{\pi^2 B}{2 \sigma_1})}$.
Solving for $B$ leads to the approximation in Th.~\ref{th:B_imag}.

\end{proof}

\begin{proof}[Proof of Thm. \ref{th:B_real}]
Consider the envelope spectrum \eqref{eq:soliton_spectrum_envelope} in Lem.~\ref{lem:sep_sol_spectrum} for eigenvalues parallel to the real axis $\lambda_k=\omega_k + j\sigma$. In the limits $f \to \pm \infty$, the envelope spectrum writes as
\begin{align}
\widehat{|Q(f)|} & \to 2\pi \exp\left(\mp \frac{\pi^2 f}{2 \sigma}\right)\sum_{k=1}^N \exp\left(\mp \frac{\pi \omega_k}{2\sigma}\right), \quad f \to \pm \infty.
\end{align} 
We can derive an approximation of the bandwidth $B=B_+ - B_-$ according to Def.~\ref{def:TB_def_energy} from $\frac{\varepsilon}{2	}E_\mathrm{tot} \approxeq \int_{-\infty}^{B_-} \widehat{|Q(f)|}^2 \partial f =\int_{B_+}^{\infty} \widehat{|Q(f)|}^2 \partial f 
\approxeq 4 \sigma \exp(\mp \frac{\pi^2 B_\pm}{\sigma}) \left(\sum_{k=1}^N \exp\left(\mp\frac{\pi \omega_k}{2\sigma}\right)\right)^2$.
Solving for $B_+$ and $B_-$ leads to the bandwidth approximation in Thm.~\ref{th:B_real}.

\end{proof}

\ifCLASSOPTIONcaptionsoff
  \newpage
\fi

\bibliographystyle{IEEEtranTCOM}
\bibliography{references_nft}

\end{document}